\newenvironment{lyxlist}[1]
{\begin{list}{}
{\settowidth{\labelwidth}{#1}
 \setlength{\leftmargin}{\labelwidth}
 \addtolength{\leftmargin}{\labelsep}
 }}
{\end{list}}
\theoremstyle{plain}
  \newtheorem{theorem}{Theorem}[section]
  \newtheorem{proposition}[theorem]{Proposition}  
\theoremstyle{definition}
  \newtheorem{definition}[theorem]{Definition}
  \newtheorem{condition}{Condition}
  \newtheorem*{IndHyp}{Induction hypothesis}
\theoremstyle{remark}
  \newtheorem{remark}[theorem]{Remark}
  \newtheorem{example}[theorem]{Example}
\newcommand{\adv}{\mathrm{adv}}
\newcommand{\D}{\mathscr{D}} 
\newcommand{\dE}{\cdot_{\mathrm{E}}} 
\newcommand{\Diag}{\mathrm{Diag}}
\newcommand{\E}{\mathscr{E}} 
\newcommand{\Em}{\mathbb{E}} 
\newcommand{\FT}{\mathrm{FT}} 
\renewcommand{\L}{\mathscr{L}}
\newcommand{\loc}{\mathrm{loc}}
\newcommand{\ret}{\mathrm{ret}}
\newcommand{\sd}{\mathrm{sd}}
\newcommand{\supp}{\mathrm{supp}} 
\newcommand{\Sym}{\mathrm{Sym}}
\renewcommand{\vec}{\boldsymbol}
\newcommand{\WF}{\mathrm{WF}} 
\begin{document}

\newcommand{\kk}[1]{\overline{#1}}

\newcommand{\fps}[2]{#1\![[#2]]}

\begin{center}

\par\end{center}

\preprint{\noindent DESY 09-031, ZMP-HH/09-6}

\begin{abstract}
In the framework of perturbative Algebraic Quantum Field Theory (pAQFT)
recently developed by Brunetti, Dütsch, and Fredenhagen (\href{http://arxiv.org/abs/0901.2038}{arXiv:0901.2038}),
I give a general construction of so-called {}``Euclidean time-ordered
products'', i.e. algebraic versions of the Schwinger functions, for
scalar quantum field theories on spaces of Euclidean signature. This
is done by generalizing the recursive construction of time-ordered
products by Epstein and Glaser, originally formulated for quantum
field theories on Minkowski space (MQFT). An essential input of Epstein-Glaser
renormalization is the causal structure of Minkowski space. The absence
of this causal structure in the Euclidean framework makes it necessary
to modify the original construction of Epstein and Glaser at two points.
First, the whole construction has to be performed with an only partially
defined product on (interaction-) functionals. This is due to the
fact that the fundamental solutions of the Helmholtz operator $\left(-\Delta+m^{2}\right)$
of EQFT have a unique singularity structure, i.e. they are unique
up to a smooth part. Second, one needs to (re-)introduce a (rather
natural) {}``Euclidean causality'' condition for the recursion of
Epstein and Glaser to be applicable.
\end{abstract}

\keywords{pAQFT, Euclidean QFT, Epstein-Glaser renormalization}

\title{Euclidean Epstein-Glaser Renormalization}

\author{Kai J. Keller}

\affiliation{{II.} Institute for Theoretical Physics, Hamburg University,\\
Luruper Chaussee 149, 22761 Hamburg, Germany}

\email{kai.johannes.keller@desy.de}

\maketitle
\tableofcontents{}

\begin{fmffile}{EGeukl28}


\newcommand{\FGH}{
\begin{fmfgraph}(20,15)
\fmfbottom{g,h} \fmftop{f}
\fmfdot{f,g,h}
\end{fmfgraph}
}


\newcommand{\FGoneHone}{
\begin{fmfgraph}(20,15)
\fmfbottom{g,h} \fmftop{f}
\fmfdot{f,g,h}
\fmf{plain}{g,h}
\end{fmfgraph}
}

\newcommand{\FoneGHone}{
\begin{fmfgraph}(20,15)
\fmfbottom{g,h} \fmftop{f}
\fmfdot{f,g,h}
\fmf{plain}{f,h}
\end{fmfgraph}
}

\newcommand{\FoneGoneH}{
\begin{fmfgraph}(20,15)
\fmfbottom{g,h} \fmftop{f}
\fmfdot{f,g,h}
\fmf{plain}{f,g}
\end{fmfgraph}
}


\newcommand{\FGtwoHtwo}{
\begin{fmfgraph}(20,15)
\fmfbottom{g,h} \fmftop{f}
\fmfdot{f,g,h}
\fmf{plain,right=.5}{g,h}
\fmf{plain,left=.5}{g,h}
\end{fmfgraph}
}

\newcommand{\FoneGoneHtwo}{
\begin{fmfgraph}(20,15)
\fmfbottom{g,h} \fmftop{f}
\fmfdot{f,g,h}
\fmf{plain}{f,h}
\fmf{plain}{g,h}
\end{fmfgraph}
}

\newcommand{\FoneGoneoneHone}{
\begin{fmfgraph}(20,15)
\fmfbottom{g,h} \fmftop{f}
\fmfdot{f,g,h}
\fmf{plain}{f,g}
\fmf{plain}{g,h}
\end{fmfgraph}
}

\newcommand{\FtwoGHtwo}{
\begin{fmfgraph}(20,15)
\fmfbottom{g,h} \fmftop{f}
\fmfdot{f,g,h}
\fmf{plain,right=.5}{f,h}
\fmf{plain,left=.5}{f,h}
\end{fmfgraph}
}

\newcommand{\FtwoGoneHone}{
\begin{fmfgraph}(20,15)
\fmfbottom{g,h} \fmftop{f}
\fmfdot{f,g,h}
\fmf{plain}{f,g}
\fmf{plain}{f,h}
\end{fmfgraph}
}

\newcommand{\FtwoGtwoH}{
\begin{fmfgraph}(20,15)
\fmfbottom{g,h} \fmftop{f}
\fmfdot{f,g,h}
\fmf{plain,right=.5}{f,g}
\fmf{plain,left=.5}{f,g}
\end{fmfgraph}
}


\newcommand{\FthreeGoneHtwoF}{
\begin{fmfgraph*}(40,30)
\fmfbottom{g,h} \fmftop{f}
\fmfdot{f,g,h}
\fmflabel{$F^{(5)}$}{f}
\fmflabel{$G^{(4)}$}{g}
\fmflabel{$H^{(3)}$}{h}
\fmf{plain,right=.3}{f,g}
\fmf{plain}{f,g}
\fmf{plain,left=.3}{f,g}
\fmf{plain,right=.3}{g,h}
\fmf{plain,right=.3}{h,f}
\fmf{plain,left=.3}{h,f}
\end{fmfgraph*}
}

\section{Introduction}

In perturbative quantum field theory (pQFT) one is interested in the
terms of the expansion of the $S$-Matrix, i.e. the time-ordered exponential\begin{align}
S(V)=\exp_{\mathrm{T}}(V) & =\sum_{n=0}^{\infty}\frac{1}{n!}V\cdot_{\mathrm{T}}\cdots\cdot_{\mathrm{T}}V=\sum_{n=0}^{\infty}\frac{1}{n!}S^{\left(n\right)}(V^{\otimes n})\,.\label{eq:S-Matrix}\end{align}
$S^{\left(n\right)}$ denotes here the $n^{\mathrm{th}}$ functional
derivative of $S$ with respect to the interaction functional $V:\varphi\rightarrow V(\varphi)$.
As is well known the terms of this expansion, referred to as \emph{time-ordered
products}, give information about transition probabilities in collision
processes of elementary particles (LSZ-relations). The problem occurring
here, referred to as the \emph{renormalization problem} (of pQFT)
is that the time-ordered product $F\cdot_{\mathrm{T}}G$ of two functionals
is generally ill-defined if the supports of the functionals intersect.
The aim of renormalization thus is to make sense of the time-ordered
product also for (local) functionals with coinciding supports.

Although there is a mathematically rigorous formulation of renormalization
on Minkowski-, or even curved Lorentzian spacetimes \citep{Brunetti2000},
it still seems somewhat far from the tools applied in concrete calculations
of transition probabilities, which in turn are known to be in excellent
accordance with experimental data. These calculations are often performed
on spaces of Euclidean signature, which leads to {}``easier'' expressions
and is possible since the fundamental solutions of the Klein-Gordon
operator depend on the hyperbolic distance only (cf. \citep{BolliniGiambiagi1996}).
The way back to MQFT however is not always open (cf. \citep{OsterwalderSchrader1973,OsterwalderSchrader1975,EckmannEpstein1979}).

In the standard approach the passage to Euclidean signature is performed
by an analytic continuation of the Wightman functions%
\footnote{These are the correlation functions of Minkowski QFT, i.e. vacuum
expectation values of products of fields.%
} to the {}``permuted extended tubes'' and evaluation at so-called
Euclidean points or {}``Schwinger points'' \citep{StreaterWightman1964,Schwinger1959,Symanzik1969},
e.g. $\left(ix_{0},\dots,x_{4};iy_{0},\dots,y_{4}\right)$ for the
two-point-function in $D=4$. At these points the \emph{hyperbolic
distance} takes the form of a (negative) \emph{Euclidean distance}:\begin{equation}
x_{0}^{2}-x_{1}^{2}-x_{2}^{2}-x_{3}^{2}\mapsto-x_{0}^{2}-x_{1}^{2}-x_{2}^{2}-x_{3}^{2}.\label{eq:WickRotation}\end{equation}
Because the transition to Euclidean signature amounts to {}``rotating''
the time coordinate by $i=e^{i\frac{\pi}{2}}$ in the complex plane
it is often referred to as {}``Wick rotation''. Performing calculations
using the Wick-rotated Wightman functions has the (rather obvious)
advantage that the Euclidean distance on the right hand side of (\ref{eq:WickRotation})
vanishes only in the origin, whereas the set of zeros of the Minkowskian
distance on the left hand side is the whole forward and backward lightcone.
This entails that in the Wick rotated setting the amplitudes of graphs
with (at most) one loop can be made absolutely convergent. Divergences
of higher loop order and especially so-called overlapping divergences
can then be removed using the graph-by-graph method of Bogoliubov,
Parasiuk, Hepp and Zimmermann, abbreviated BPHZ renormalization \citep{Bogoliubow1957,Hepp1966,Zimmermann1969}.

On the other hand one great disadvantage of the Euclidean framework
becomes apparent at this stage already. The causal structure as in
any relativistic theory encoded very nicely in the Minkowski signature,
is completely lost.

Causality, however is a major ingredient in the formulation of (perturbative)
QFT on Minkowski and curved, globally hyperbolic spacetimes (cf. \citep{StueckelbergRivier1950,BogoliubovShirkov1959,Steinmann1971,Epstein1973};
and \citep{Brunetti2000,HollandsWald2001,Brunetti2009} respectively).
There are two main points where it enters in the formalism. One is
the construction of the algebra of observables, where it enters the
definition of the star-product%
\footnote{The {}``deformation quantizational'' viewpoint has proven to be
both, of structural clarity and convenience for the investigation
of perturbative QFT \citep{Dutsch2000,Hirshfeld2002,Brunetti2009}%
} in form of the causal propagator $E=E^{\ret}-E^{\adv}$ fulfilling\[
\left[\varphi(x),\varphi(y)\right]_{\star}=iE(x-y)\]
in the sense of distribution kernels. The second point is the recursive
construction of time-ordered products in Epstein-Glaser renormalization.
There it enters in the form of the {}``causality condition'', which
makes the construction of time-ordered products up to the thin diagonal
possible.

The aim of this note is to develop a Euclidean version of Epstein-Glaser
renormalization in order to investigate the local (i.e. {}``UV''-)
structure of Euclidean pQFT. In particular we want to gain a deeper
understanding of the relation of the two viewpoints briefly introduced
above, the BPHZ procedure mostly applied in the Euclidean setting
and the Epstein-Glaser recursion, seemingly tied to the causal structure
of spacetimes of Minkowski signature. Besides this there is a second
motivation. The fact that the formulation of Epstein-Glaser renormalization
in the Euclidean framework is possible, despite the absence of a globally
defined star-product, suggests that the whole recursive procedure
of Epstein and Glaser does not depend on the star-product structure
of Minkowskian pQFT at all. Consequently it should be possible to
perform the same construction on Minkowski spacetime using the time-ordered
product only.

As asserted above in this article we are only concerned with the local
properties of the theory, i.e. we do not take vacuum expectation values
of products of fields (Wightman functions) and then perform the Wick
rotation to EQFT with them. We rather regard the implications of going
to Euclidean signature on the algebraic level (i.e. before expectation
values in certain states are taken into account). This already gives
us all the information about the local properties of the theory. Evaluating
the newly defined Euclidean time-ordered products in some {}``vacuum
state'' and performing the adiabatic limit would give us back the
original Schwinger functions. This last step however is not our concern
in the present note.

The strategy of the construction presented here is as follows. We
will use a fundamental solution $P$ of the Helmholtz operator $\left(-\Delta+m^{2}\right)$,
i.e. the {}``Wick rotated'' Klein-Gordon operator, to define a {}``Euclidean
time-ordered product'' for functionals with disjoint supports. These
functionals will then form an associative \emph{partial algebra},
i.e. an algebra with only partially defined product. Due to associativity
the $n$-fold Euclidean time-ordered product can be defined as a multi-linear
map $E_{n}$ on this partial algebra. We then introduce a supplement
for the causality condition of Epstein and Glaser called {}``Euclidean
causality'' which makes it possible to extend the domain of definition
of $E_{n}$ to tensor products of functionals whose support does not
intersect the thin diagonal. The Epstein-Glaser induction closes if
an extension of the domain of definition to the thin diagonal is possible.
As in the original work of Epstein and Glaser the extension problem
can be reduced to the extension of certain scalar distributions, which
in the case of flat (Euclidean) space are translation invariant. This
reduces the extension problem for the $E_{n}$ to that of extending
the domain of the scalar distributions to the origin. The extension
problem for scalar distributions however is well understood \citep{Steinmann1971,Epstein1973}
and is most conveniently discussed in terms of two fundamental theorems
by Brunetti and Fredenhagen \citep[Thm. 5.2 \& 5.3]{Brunetti2000}.

\section{\label{sec:Preliminaries}Preliminaries}

In this section I introduce the basic setup of perturbative Algebraic
Quantum Field Theory (pAQFT) as developed by Brunetti, Dütsch and
Fredenhagen \citep{Brunetti2009} applied, however, to the Euclidean
setting. A remark on how to translate the concepts from Minkowski
to Euclidean signature was also given by R. Stora, see \citep{Stora2006}
for instance.

Within this article let $\Em$ be a $d$-dimensional Euclidean space
and $\mathcal{C}(\Em)\equiv C^{\infty}(\Em)$ the configuration space
of a scalar field theory. Let furthermore $\widetilde{\mathcal{F}}(\Em)$
be the space of \emph{smooth functionals}. These are maps $F:\mathcal{C}(\Em)\rightarrow\mathbb{C}$
for which the $n^{\mathrm{th}}$ functional derivative, denoted by
\[
\left\langle F^{\left(n\right)}(\varphi),h^{\otimes n}\right\rangle \equiv\frac{\delta^{n}F}{\delta\varphi^{n}}(h^{\otimes n}):=\frac{d^{n}}{d\lambda^{n}}\bigg|_{\lambda=0}F(\varphi+\lambda h)\,,\]
exists as a symmetric distribution in $n$ variables, $F^{\left(n\right)}(\varphi)\in\D'(\Em^{n})$.%
\footnote{We will generally assume the functionals occuring in this article
to be smooth in the above sense. %
} We define $\mathcal{F}(\Em)\subset\widetilde{\mathcal{F}}(\Em)$
to be the subspace of (smooth) functionals of compact support, i.e.
for $F\in\mathcal{F}(\Em)$ and all $n\in\mathbb{N}$ the $n^{\mathrm{th}}$
functional derivative $F^{\left(n\right)}(\varphi)\equiv\frac{\delta^{n}F}{\delta\varphi^{n}}$
of $F$ is a distribution of compact support, $F^{\left(n\right)}(\varphi)\in\E'(\Em^{n})$.

\begin{remark}

The support of a functional $F\in\mathcal{F}(\Em)$ can be defined
by the equivalence:\[
\supp(F)\cap\supp(h)=\emptyset\quad\Leftrightarrow\quad\forall\varphi\in\mathcal{C}(\Em):\, F(\varphi+h)=F(\varphi)\,,\]
where $h\in C^{\infty}(\Em)$. Observe that if $\supp(F)\cap\supp(h)=\emptyset$
we have\[
\left\langle F^{\left(n\right)}(\varphi),h^{\otimes n}\right\rangle =\frac{d^{n}}{d\lambda^{n}}\bigg|_{\lambda=0}F(\varphi+\lambda h)=\frac{d^{n}}{d\lambda^{n}}\bigg|_{\lambda=0}F(\varphi)=0\,.\]
Conversely if $\frac{\delta F}{\delta\varphi}(h)=0$ it follows that
$F(\varphi+\lambda h)$ is invariant under (infinitesimal) changes
in $\lambda$, i.e. $F$ does not change in the {}``direction''
of $h$, $F(\varphi+h)=F(\varphi)$. It follows that\[
\supp(F)\equiv\overline{\bigcup_{\varphi\in\mathcal{C}(\Em)}\supp(F^{\left(1\right)}(\varphi))}\,,\]
where the support of the distribution $F^{\left(1\right)}(\varphi)$
is defined in the standard way (e.g. \citep[p.139]{ReedSimon1980Vol1}).
Hence the support of the $n^{\mathrm{th}}$ functional derivative
$F^{\left(n\right)}(\varphi)$ is contained in the $n$-fold Cartesian
product:\[
\supp(F^{\left(n\right)}(\varphi))\subset\supp(F)^{n}\,,\]
which is compact if $\supp(F)$ is compact.

\end{remark}

We define yet another class of so-called \emph{local functionals},
which describe local interactions

\noindent \begin{definition}\label{def:Local-Functional} A functional
of compact support, $F\in\mathcal{F}(\Em)$, is called a \emph{local
functional} if for all $n\in\mathbb{N}$

\begin{lyxlist}{MMM}
\item [{{[}LF-1]}] \noindent the support of the $n^{\mathrm{th}}$ functional
derivative of $F$ is contained in the thin diagonal $\Diag(\Em^{n}):=\left\{ \left(x_{1},\dots,x_{n}\right)\in\Em^{n}:x_{1}=\cdots=x_{n}\right\} $,\[
\supp(F^{\left(n\right)}(\varphi))\subset\Diag(\Em^{n})\,.\]

\item [{{[}LF-2]}] the wave front set%
\footnote{For the definition of the wave front set of a distribution see e.g.
\citep{Hoermander2003}. An easy example is also given in Appendix
\ref{sec:WF-delta}.%
} of $F^{\left(n\right)}(\varphi)$ is perpendicular to the tangent
bundle of the thin diagonal,\[
\WF(F^{\left(n\right)})\subset\left(T\Diag(\Em^{n})\right)^{\perp}\,.\]

\end{lyxlist}
We denote the space of local functionals by $\mathcal{F}_{\mathrm{loc}}(\Em)$.

\end{definition}

\begin{example}Typical examples for local functionals are field monomials\[
F(\varphi)=\frac{1}{k!}\int_{\Em}\left(\varphi(x)\right)^{k}f(x)\, dx\quad,\, f\in\D(\mathbb{E})\,.\]
Their functional derivatives have integral kernels of the form\[
\left(F^{\left(n\right)}(\varphi)\right)(x_{1},\dots,x_{n})=\frac{1}{\left(k-n\right)!}\left(\varphi(x_{1})\right)^{\left(k-n\right)}f(x_{1})\,\delta(x_{1}-x_{2})\cdots\delta(x_{n-1}-x_{n})\,,\]
which obviously are compactly supported on the thin diagonal, i.e.
{[LF-1]}. Furthermore their wave front set is that of Dirac's $\delta$-distribution
(see Appendix~\ref{sec:WF-delta}),\[
\WF(F^{\left(n\right)}(\varphi))=\left\{ \left(\vec{x},\vec{k}\right)\in T^{*}\Em^{n}:x_{1}=\cdots=x_{n},\,\sum_{i=1}^{n}k_{i}=0\right\} \,,\]
which is transversal to the tangent bundle of the thin diagonal%
\footnote{This for instance can be computed as the range of the differential
of the diagonal map, $\Em\rightarrow\Em^{n}:x\mapsto\left(x,\dots,x\right)$,
as done e.g. in \citep{Hoermander2003}.%
}\[
T\Diag(\Em^{n})=\left\{ \left(\vec{x},\vec{v}\right)\in T\Em^{n}:\, x_{1}=\cdots=x_{n},\, v_{1}=\cdots=v_{n}=v\right\} \,,\]
as is readily seen from the dual pairing at points ${\vec{x}\in\Diag(\Em^{n})}$.
For any $\left(\vec{x},\vec{k}\right)\in\WF(F^{\left(n\right)}(\varphi))$
and $\left(\vec{x},\vec{v}\right)\in T\Diag(\Em^{n})$ we have \[
\left\langle \vec{k},\vec{v}\right\rangle _{\vec{x}}\equiv\sum_{i=1}^{n}\left\langle k_{i},v\right\rangle =0\,,\]
hence {[LF-2]}.

\end{example}

\section{\label{sec:Partial-Algebra-of-Functionals-of-Compact-Support}The
Partial Algebra of Functionals of Compact Support}

We regard the Helmholtz operator $-\Delta+m^{2}$ on Euclidean space
$\Em$. This corresponds to the {}``Wick rotated'' Klein-Gordon
operator $\square+m^{2}$ for scalar QFT on Minkowski spacetime. The
Helmholtz operator is an elliptic partial differential operator, and
hence its fundamental solution $P$, fulfilling\begin{equation}
\left(-\Delta+m^{2}\right)P=\delta\,,\label{eq:Helmholtz-Eq}\end{equation}
in the sence of distributions, is unique up to a smooth part. This
is due to the fact that the solutions of the homogeneous equation
are smooth functions (cf. \citep[Eq. (8.1.11) and Thm. 8.3.1]{Hoermander2003}).
We choose a fixed $P$ by requiring invariance under the full Euclidean
group%
\footnote{In particular symmetry and translation invariance of $P$ are used
explicitly in the proof of Proposition \ref{pro:Partial-Algebra-of-Functionals}
and in section \ref{sub:Extension-to-the-whole-space}, respectively. %
} and Dirichlet boundary conditions at infinity, i.e. $P(x)\xrightarrow{\left\Vert x\right\Vert \rightarrow\infty}0$.
This choice is arbitrary, but corresponds to the standard one. I want
to emphasize that most of the arguments in this article do not depend
on the choice of a specific fundamental solution $P$. This applies
to all arguments depending only on the wave front set of $P$ and
in particular to the domain of definition of the Euclidean time-ordered
product, to be defined below. According to \citep[Cor. 8.3.2]{Hoermander2003}
the wave front set of any $P(x,y)$ fulfilling (\ref{eq:Helmholtz-Eq})
is that of the Dirac $\delta$-distribution,\begin{equation}
\WF(P(x,y))=\WF(\delta(x,y))=\left\{ \left(x,k_{1};x,k_{2}\right)\in T^{*}\Em^{2}:\, k_{1}+k_{2}=0\right\} \,.\label{eq:WF-P}\end{equation}

Motivated by the result for Minkowski spacetime \citep{Brunetti2009}
we define a {}``time ordering'' operator on functionals $F\in\mathcal{F}(\Em)$
by\begin{equation}
T_{\mathrm{E}}:=\exp\left(\hbar\Gamma\right)\,,\qquad\Gamma=\frac{1}{2}\int dx\, dy\, P(x,y)\,\frac{\delta^{2}}{\delta\varphi(x)\delta\varphi(y)}\,.\label{eq:T-Operator}\end{equation}

\begin{remark}There is a formal correspondence of the approach we
choose here to the more standard approach to Euclidean QFT in terms
of Gaussian functional integrals as discussed e.g. in \citep{Roepstorff1994}
and \citep{Salmhofer1999}. Namely, as the exponential of a second
order differential operator, $T_{\mathrm{E}}$ can formally be written
as the operator of convolution with a Gaussian measure with zero mean
and covariance $\hbar P$; see also the remark in the original treatment
\citep[p. 6]{Brunetti2009},\[
\left(T_{\mathrm{E}}F\right)(\varphi)=\int d\mu_{\hbar P}(\varphi-\phi)F(\phi)\,.\]
To make this more explicit, but without pondering too much about well-defined\-ness
here, we write the Gaussian measure as to be a measure in a suitably
chosen path space:\[
d\mu_{\hbar P}(\phi)=e^{\frac{1}{\hbar}S\left[\phi\right]}\mathcal{D}\phi\,,\]
where $S$ denotes the free action functional. By using the analogy
to the finite-dimensional case one computes:{\allowdisplaybreaks\begin{align*}
\int d\mu_{\hbar P}(\varphi-\phi)F(\phi) & =\int\mathcal{D}\phi\, e^{\frac{1}{\hbar}S[\varphi-\phi]}F(\phi)\\
 & =\int\mathcal{D}\phi\, e^{\frac{1}{2\hbar}\left\langle \varphi-\phi,\left(-\Delta+m^{2}\right)\varphi-\phi\right\rangle }F(\phi)\\
 & =\int\mathcal{D}\phi\,\int\mathcal{D}J\, e^{i\left\langle \varphi-\phi,J\right\rangle }\, e^{\frac{\hbar}{2}\left\langle J,\left(-\Delta+m^{2}\right)^{-1}J\right\rangle }F(\phi)\\
 & =\int\mathcal{D}J\, e^{i\left\langle \varphi,J\right\rangle }\int\mathcal{D}\phi\, e^{-i\left\langle \phi,J\right\rangle }\, e^{\frac{\hbar}{2}\left\langle J,PJ\right\rangle }F(\phi)\\
 & =\int\mathcal{D}J\, e^{i\left\langle \varphi,J\right\rangle }\int\mathcal{D}\phi\, e^{-i\left\langle \phi,J\right\rangle }\, e^{\frac{\hbar}{2}\left\langle \frac{\delta}{\delta\phi},P\frac{\delta}{\delta\phi}\right\rangle }F(\phi)\\
 & =\left[\mathfrak{F}^{-1}\mathfrak{F}\left(e^{\hbar\Gamma}F\right)\right](\varphi)=\left(T_{\mathrm{E}}F\right)(\varphi)\,,\end{align*}
}where we have written $\mathfrak{F}$ for the {}``functional Fourier
transform'' and used (\ref{eq:Helmholtz-Eq}). However, I want to
emphasize that the definition of the Euclidean time-ordering operator
$T_{\mathrm{E}}$ above is completely independent of the Gaussian
measure $d\mu_{\hbar P}$.

\end{remark}

We proceed by defining the so called Euclidean time-ordered product
for $F,G\in\mathcal{F}(\Em)$. It is obtained as a deformation of
the pointwise product $M$, $\left[M\left(F\otimes G\right)\right](\varphi):=F(\varphi)G(\varphi)$,\begin{equation}
\xymatrix{\fps{\mathcal{F}(\Em)}{\hbar}^{\otimes2}\ar[r]^{T_{\mathrm{E}}^{\otimes2}}\ar[d]_{M} & \fps{\mathcal{F}(\Em)}{\hbar}^{\otimes2}\ar@{-->}[d]^{\dE}\\
\fps{\mathcal{F}(\Em)}{\hbar}\ar[r]^{T_{\mathrm{E}}} & \fps{\mathcal{F}(\Em)}{\hbar}\,,}
\qquad F\dE G:=T_{\mathrm{E}}\circ M\circ\left(T_{\mathrm{E}}^{-1}F\otimes T_{\mathrm{E}}^{-1}G\right)\,.\label{eq:T-Prod-Operations}\end{equation}
$\mathcal{F}(\Em)$ is embedded in the space of formal power series
in $\hbar$, $\fps{\mathcal{F}(\Em)}{\hbar}$, as the component of
order $\hbar^{0}$. The time-ordering operator $T_{\mathrm{E}}$ as
well as the product $\dE$ is extended to $\fps{\mathcal{F}(\Em)}{\hbar}$
by linearity. By abuse of terminology we refer to the elements in
$\fps{\mathcal{F}(\Em)}{\hbar}$ also as functionals of compact support.

\begin{example}The inverse time ordering operator $T_{\mathrm{E}}^{-1}$
in (\ref{eq:T-Prod-Operations}) induces what is sometimes called
{}``Euclidean Wick ordering''. Take as an example the linear functionals\[
F(\varphi):=\int dx\, f(x)\,\varphi(x),\quad G(\varphi):=\int dx\, g(x)\,\varphi(x)\,,\quad f,g\in\D(\Em)\,.\]
Then\[
\left(T_{E}^{-1}(FG)\right)(\varphi)=\int dx\, dy\, f(x)\, g(y)\left[\varphi(x)\varphi(y)-\hbar P(x,y)\right]\,,\]
which can be interpreted as the Euclidean correspondence of the point
splitting approximation to normal ordering. This is why the corresponding
product (\ref{eq:T-Prod-Operations}), is often referred to as the
{}``Euclidean Wick product''. Due to its domain of definition (see
below), however, I refrain from doing so and rather call $\dE$ the
{}``Euclidean time-ordered product''.\end{example}

Observe that applying the inverse time ordering operator $T_{\mathrm{E}}^{-1}:=\exp(-\hbar\Gamma)$
before pointwise multiplication makes the tadpole terms in the expansion
of $F\dE G$ vanish (see Appendix \ref{sec:Graphs-and-Tadpoles}).
Hence we can write (\ref{eq:T-Prod-Operations}) more conveniently
as:\begin{equation}
F\dE G=\sum_{k=0}^{\infty}\frac{\hbar^{k}}{k!}\left\langle F^{\left(k\right)},P^{\otimes k}G^{\left(k\right)}\right\rangle \,,\label{eq:T-Prod-SeriesExpansion}\end{equation}
where we used the notation $P^{\otimes k}G^{\left(k\right)}$ for
the application of the map\[
\begin{array}{rccl}
P: & \E'(\Em) & \rightarrow & \D'(\Em)\\
 & f & \mapsto & P*f=\int dy\, P(\cdot,y)f(y)\end{array}\]
in each argument of $G^{\left(k\right)}$. The convolution on the
right hand side is well-defined for all $f\in\E'(\Em)$ (cf.\citep[Def. 4.2.2]{Hoermander2003}).
Hence for functionals $F\in\mathcal{F}(\Em)$ we can define:\begin{equation}
\begin{array}{rccl}
P^{\otimes k}: & \E'(\Em^{m}) & \rightarrow & \E'(\Em^{\left(m-k\right)})\otimes\D'(\Em^{k})\\
 & F^{\left(m\right)}(\varphi) & \mapsto & P^{\otimes k}F^{\left(m\right)}(\varphi)=:F_{\left(k\right)}^{\left(m-k\right)}(\varphi)\,,\end{array}\label{eq:P-map-E-D}\end{equation}
where we introduced subscript indices to denote the part of $P^{\otimes k}F^{\left(m\right)}(\varphi)$
in $\D'(\Em^{k})$. Observe that the application of $P^{\otimes k}$
does not preserve symmetry; while $F^{\left(m\right)}$ is a symmetric
distribution in $m$ variables, $F_{\left(k\right)}^{\left(m-k\right)}\equiv P^{\otimes k}F^{\left(m\right)}$
is symmetric in each set of variables separately. To be more explicit,
the part in the integral kernel representing $F_{\left(k\right)}^{\left(m-k\right)}$
is given by\[
\prod_{i=1}^{k}P(x_{i},y_{i})F^{\left(m\right)}(x_{1},\dots,x_{k};x_{k+1},\dots,x_{m})\,,\]
which is totally symmetric in $\left\{ x_{1},\dots,x_{k}\right\} $
and $\left\{ x_{k+1},\dots,x_{m}\right\} $ separately. Hence we define\begin{equation}
F_{\left(l\right)\left(m\right)}^{\left(k\right)}:=P^{\otimes l}F_{\left(m\right)}^{\left(k+l\right)}\,,\label{eq:P-notation}\end{equation}
which makes it easy to read off the (permutational) symmetry of the
distribution. Observe however that by the total symmetry of $F^{\left(k+l+m\right)}$
we have that $F_{\left(l\right)\left(m\right)}^{\left(k\right)}=F_{\left(m\right)\left(l\right)}^{\left(k\right)}$.

The product $F\dE G$ is not defined for all functionals $F,G\in\mathcal{F}(\Em)$.
This becomes apparent if we regard local functionals $F,G\in\mathcal{F}_{\mathrm{loc}}(\Em)$,
since for them the pointwise product of the distributions $F^{\left(k\right)}(\varphi)$
and $G_{\left(k\right)}(\varphi)$ is not well-defined. Explicitly
we can see this by writing the $k^{\mathrm{th}}$ term of (\ref{eq:T-Prod-SeriesExpansion})
as:\begin{equation}
\left\langle F^{\left(k\right)},P^{\otimes k}G^{\left(k\right)}\right\rangle =\int_{\Em^{2k}}dx_{1}\cdots dy_{k}\,\prod_{i=1}^{k}P(x_{i},y_{i})F^{\left(k\right)}(x_{1},\cdots,x_{k})G^{\left(k\right)}(y_{1},\cdots,y_{k})\,.\label{eq:T-Prod-Term}\end{equation}
According to the wave front set of the fundamental solution $P$ (\ref{eq:WF-P})
the product of $P(x,y)$ with itself is not defined for coinciding
points $x=y$ (cf. \citep[Thm. 8.2.10]{Hoermander2003}). There are
covectors $\left(k_{1},k_{2}\right),\left(k_{1}',k_{2}'\right)\in\left[\WF(P)\right]_{2}$
such that for $i\in\left\{ 1,2\right\} $: $k_{i}+k_{i}'=0$, hence
$\vec{0}\in\left[\WF(P)\oplus\WF(P)\right]_{2}$.%
\footnote{$\left[\WF(P)\right]_{2}$ denotes the second, i.e. covector-, component
of $\WF(P)\subset T^{*}\Em^{2}$. %
} Now, if $F$ and $G$ are local functionals, $F^{\left(k\right)}(x_{1},\dots,x_{k})$
and $G^{\left(k\right)}(y_{1},\dots,y_{k})$ have support only on
the thin diagonal $\left\{ x_{1}=\cdots=x_{k}\right\} $ and $\left\{ y_{1}=\cdots=y_{k}\right\} $,
respectively. Hence in order for the integral (\ref{eq:T-Prod-Term})
to be well-defined, we have to ask for the functionals to fulfill
\begin{equation}
\supp(F)\cap\supp(G)=\emptyset\,.\label{eq:Condition-T-Prod}\end{equation}
To sum up, we have a Euclidean time-ordered product $F\dE G$ of functionals
$F,G\in\mathcal{F}(\Em)$ which is well-defined up to the diagonal,
i.e. on $\Em^{2}\backslash\Diag(\Em^{2})$.

Observe that if (\ref{eq:Condition-T-Prod}) holds then $F\dE G$
is not a local functional. The first term in the expansion (\ref{eq:T-Prod-SeriesExpansion})
is the pointwise product $FG$ whose $n^{\mathrm{th}}$ functional
derivative is given by\[
\left(FG\right)^{\left(n\right)}=\sum_{k=0}^{n}F^{\left(k\right)}G^{\left(n-k\right)}.\]
And for non-vanishing $n$ and $k$, ${\supp(F^{\left(k\right)}G^{\left(n-k\right)})\nsubseteq\Diag(\Em^{n})}$
if $\supp(F)\cap\supp(G)=\emptyset$. A similar argument applies to
the other terms in the expansion. Nevertheless for functionals of
compact support, the support of the functional derivatives of any
term in (\ref{eq:T-Prod-SeriesExpansion}) is a Cartesian product
of compact regions and hence compact. In the above example we would
have $\supp(F^{\left(k\right)}G^{\left(n-k\right)})\subset\supp(F)^{k}\times\supp(G)^{\left(n-k\right)}$.

Regardless of the fact that the Euclidean time-ordered product is
well-defined on a subset of $\mathcal{F}(\Em)^{2}$ only, we can prove
associativity for its domain of definition.

\begin{proposition}[Partial Algebra of Functionals of Compact Support]\label{pro:Partial-Algebra-of-Functionals}
Let $F,G\in\fps{\mathcal{F}(\Em)}{\hbar}$ be functionals of compact
support. Then the Euclidean time-ordered product\[
\dE:\left(F,G\right)\mapsto F\dE G=\sum_{k=0}^{\infty}\frac{\hbar^{k}}{k!}\left\langle F^{\left(k\right)},P^{\otimes k}G^{\left(k\right)}\right\rangle \]
is well-defined in the region\[
\mathcal{D}:=\left\{ \left(F,G\right)\in\fps{\mathcal{F}(\Em)}{\hbar}^{2}:\,\supp(F)\cap\supp(G)=\emptyset\right\} ,\]
and $F\dE G\in\fps{\mathcal{F}(\Em)}{\hbar}$.

When restricted to $\mathcal{D}$, the product $\dE$ is commutative
and associative. Hence $\left(\fps{\mathcal{F}(\Em)}{\hbar},\dE\right)$
is a \emph{commutative partial algebra}, i.e. a vector space $\fps{\mathcal{F}(\Em)}{\hbar}$
with a commutative, associative product which may not be defined for
all pairs $\left(F,G\right)\in\fps{\mathcal{F}(\Em)}{\hbar}^{2}$.\end{proposition}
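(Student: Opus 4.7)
The statement packages four assertions—well-definedness on $\mathcal{D}$, landing again in $\fps{\mathcal{F}(\Em)}{\hbar}$, commutativity, and associativity—and I would treat them in that order.

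For well-definedness I would analyse each coefficient of $\hbar$ in (\ref{eq:T-Prod-SeriesExpansion}) separately. By (\ref{eq:WF-P}) the singular support of $P$ is the thin diagonal of $\Em^{2}$; hence for $(F,G)\in\mathcal{D}$ the fundamental solution $P$ is smooth on an open neighbourhood of $\supp(F)\times\supp(G)$, so $\prod_{i=1}^{k}P(x_{i},y_{i})$ acts as a smooth multiplier on a neighbourhood of the support of $F^{(k)}(\varphi)\otimes G^{(k)}(\varphi)$. The $k$th term of (\ref{eq:T-Prod-SeriesExpansion}) then becomes the duality pairing of a compactly supported distribution with a smooth function, which is a finite complex number; no wave-front estimate beyond (\ref{eq:WF-P}) is required.

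To show $F\dE G\in\fps{\mathcal{F}(\Em)}{\hbar}$ I verify smoothness and compactness of support. Smoothness in $\varphi$ follows by differentiating under the pairing in (\ref{eq:T-Prod-SeriesExpansion}): each functional derivative reproduces an expression of the same type, with higher derivatives of $F$ and $G$ paired against the same smooth kernel. For the support estimate I invoke the characterisation in the Remark after Definition~\ref{def:Local-Functional}: if $\supp(h)$ is disjoint from $\supp(F)\cup\supp(G)$, then $F^{(k)}(\varphi+h)=F^{(k)}(\varphi)$ and $G^{(k)}(\varphi+h)=G^{(k)}(\varphi)$ for every $k$, so every term of (\ref{eq:T-Prod-SeriesExpansion}) is invariant under $\varphi\mapsto\varphi+h$. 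This yields $\supp(F\dE G)\subset\supp(F)\cup\supp(G)$, a compact set. Commutativity is then immediate: Euclidean invariance forces $P(x,y)=P(y,x)$, and together with the total symmetry of $F^{(k)}$ and $G^{(k)}$ this makes exchanging $F$ and $G$ in a term of (\ref{eq:T-Prod-SeriesExpansion}) a mere relabelling of integration variables.

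For associativity the strategy is to reduce the identity to a formal manipulation via the diagram (\ref{eq:T-Prod-Operations}). Since $T_{\mathrm{E}}=\exp(\hbar\Gamma)$ is invertible as a formal power series in $\hbar$, the product $\dE$ is conjugate to the pointwise product $M$, which is associative; hence formally
\[
(F\dE G)\dE H=T_{\mathrm{E}}\bigl(T_{\mathrm{E}}^{-1}F\cdot T_{\mathrm{E}}^{-1}G\cdot T_{\mathrm{E}}^{-1}H\bigr)=F\dE(G\dE H).
\]
The actual content of the claim is that both sides are genuinely defined, and here the support estimate of the preceding paragraph pays off: if $\supp(F),\supp(G),\supp(H)$ are mutually disjoint, then $(F,G),(G,H)\in\mathcal{D}$ and additionally $\supp(F\dE G)\cap\supp(H)=\emptyset=\supp(F)\cap\supp(G\dE H)$, so the iterated pairings $(F\dE G)\dE H$ and $F\dE(G\dE H)$ both lie in the domain of $\dE$ and the formal identity becomes a genuine equality in $\fps{\mathcal{F}(\Em)}{\hbar}$.

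The main technical obstacle I anticipate is precisely the support estimate $\supp(F\dE G)\subset\supp(F)\cup\supp(G)$, because without it associativity cannot even be stated meaningfully within a partial-algebra framework; once it is in place the remaining assertions reduce either to elementary symmetry considerations or to conjugation by $T_{\mathrm{E}}$.
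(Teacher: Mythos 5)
Your proposal is correct and follows essentially the same route as the paper: well-definedness from the singular support of $P$ being the diagonal (so disjointness of $\supp(F)$ and $\supp(G)$ keeps the pairing away from the singularity), commutativity from the symmetry of $P$, and associativity by conjugating the associative pointwise product with $T_{\mathrm{E}}$. The only difference is one of emphasis: you spell out the support estimate $\supp(F\dE G)\subset\supp(F)\cup\supp(G)$ inside the proof, whereas the paper establishes the compact-support and domain properties in the discussion preceding the proposition and merely refers back to them.
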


\begin{proof}We have already discussed the domain properties of the
product $\dE$. The commutativity follows immediately from the symmetry
of $P(x,y)$, and associativity of $\dE$ follows readily from the
associativity of the pointwise product and the definition (\ref{eq:T-Prod-Operations}),\begin{align*}
\left(F\dE G\right)\dE H & =T_{\mathrm{E}}\left\{ T_{\mathrm{E}}^{-1}\left[T_{\mathrm{E}}\left(T_{\mathrm{E}}^{-1}F\cdot T_{\mathrm{E}}^{-1}G\right)\right]\cdot T_{\mathrm{E}}^{-1}H\right\} \\
 & =T_{\mathrm{E}}\left\{ T_{\mathrm{E}}^{-1}F\cdot T_{\mathrm{E}}^{-1}G\cdot T_{\mathrm{E}}^{-1}H\right\} \\
 & =F\dE\left(G\dE H\right)\,,\end{align*}
where we assumed that ${F,G,H\in\mathcal{F}(\Em)}$ have pairwise
disjoint supports, i.e. ${\emptyset=\supp(F)\cap\supp(G)=\supp(G)\cap\supp(H)=\supp(H)\cap\supp(F)}$
such that all products in the above expressions are well-defined.
Observe that the functional equation, $e^{A}e^{B}=e^{A+B}$, holds
for the exponential $T_{\mathrm{E}}\equiv e^{\hbar\Gamma}$ due to
the symmetry of the functional derivative.\end{proof}

Observe that by writing the product $F\dE G\dE H$ in terms of its
series expansion (\ref{eq:T-Prod-SeriesExpansion}), using Cauchy's
product formula and the Leibniz rule, the graph structure of the expansion
becomes immediately apparent:\[
F\dE G\dE H=\sum_{n=0}^{\infty}\frac{\hbar^{n}}{n!}\sum_{m=k}^{n}\sum_{k=0}^{m}{n \choose m}{m \choose k}\left\langle F^{\left(k+m-k\right)}G_{\left(k\right)}^{\left(n-m\right)}H_{\left(n-m\right)\left(m-k\right)}\right\rangle \,.\]
The terms on the right hand side correspond to graphs with three vertices,
$F$, $G$, $H$, where $k$ lines connect $F$ and $G$, $\left(m-k\right)$
edges connect $F$ and $H$ and there are $\left(n-m\right)$ lines
between $G$ and $H$, see also Appendix \ref{sec:Graphs-and-Tadpoles}.

\section{\label{sec:Renormalization}Renormalization}

Associativity makes it possible to speak of $n$-fold time-ordered
products\begin{equation}
E_{n}(F_{1}\otimes\cdots\otimes F_{n}):=F_{1}\dE\cdots\dE F_{n}\,,\label{eq:En}\end{equation}
These are linear maps\[
E_{n}:\fps{\mathcal{F}(\Em)}{\hbar}^{\otimes n}\rightarrow\fps{\mathcal{F}(\Em)}{\hbar}\,,\]
which are well-defined, if the supports of the functionals $F_{1},\dots,F_{n}\in\mathcal{F}(\Em)$
are pairwise disjoint, i.e. \begin{equation}
\supp(F_{i})\cap\supp(F_{j})=\emptyset\quad\forall i,j\in\left\{ 1,\dots,n\right\} ,\, i\neq j\,.\label{eq:Condition-En-welldef}\end{equation}
In order to be able to properly define the coefficients $S_{\mathrm{E}}^{\left(n\right)}(V^{\otimes n})\equiv E_{n}(V^{\otimes n})$
in the expansion of the Euclidean $S$-matrix (cf. (\ref{eq:S-Matrix})),
we have to extend the maps $E_{n}$ towards functionals with arbitrary
support properties. In the presented formalism this is possible for
local functionals only. The extension is performed by applying the
recursive procedure of Epstein and Glaser. In each recursion step
Epstein and Glaser use the \emph{causality condition} to define the
time-ordered products up to the thin diagonal, \emph{translation invariance}
to define $E_{n}$ for all points except the origin and in the last
step include the origin in the domain of a newly defined time-ordered
product. It is this last step, which corresponds to renormalization.
The freedom in the definition of the new time-ordered product is governed
by the theory of \emph{extension of distributions}.

As already described in the introduction, in the Euclidean framework
we have to find a suitable replacement for the causality condition,
in order to make the Epstein-Glaser recursion applicable.

We first define\begin{equation}
\forall F\in\mathcal{F}(\Em):\quad E_{0}(F)=\openone\quad\mbox{and}\quad E_{1}(F)=F\,.\label{eq:Induction-basis}\end{equation}
This serves as the induction basis and already implies that $E_{2}:\fps{\mathcal{F}_{\loc}(\Em)}{\hbar}^{\otimes2}\rightarrow\fps{\mathcal{F}(\Em)}{\hbar}$
is symmetric and uniquely defined up to the diagonal $\Diag(\Em^{2})$.
Assuming that $E_{k}$ is properly defined for all $k<n$ on the whole
of $\Em^{k}$, makes it possible, using a certain factorization property
(see below), to uniquely define the $n$-fold product $E_{n}$ on
$\Em^{n}\backslash\Diag(\Em^{n})$. The last step, which makes the
whole argument valid, is to show that $E_{n}$ can be extended to
the whole space $\Em^{n}$.

\subsection{Construction up to the thin diagonal}

For the recursive construction of Epstein and Glaser - as well as
for its generalizations - the causality condition for the time-ordered
product is crucial. Since we cannot make use of this condition in
a Euclidean framework, we have to replace it by another one, which
we call \emph{Euclidean causality}, and which makes also sense on
general Riemannian manifolds.

\begin{condition}[Euclidean Causality]\label{con:Eucl-Causality}
Let $I\subset\left\{ 1,\dots,k\right\} $ be a subset of the index
set $\left\{ 1,\dots,k\right\} $ with non-empty complement $I^{c}$.
If for all $i\in I$ and for all $j\in I^{c}$ the supports of the
corresponding functionals are disjoint, \[
\forall i\in I,\,\forall j\in I^{c}:\quad\supp(F_{i})\cap\supp(F_{j})=\emptyset\,,\]
then the $k$-fold Euclidean time-ordered product has the following
factorization property: \[
E_{k}(F_{1}\otimes\cdots\otimes F_{k})=E_{\left|I\right|}(\bigotimes_{i\in I}F_{i})\dE E_{\left|I^{c}\right|}(\bigotimes_{j\in I^{c}}F_{j})\,.\]
\end{condition}

Having this supplement for causality, we can start the induction procedure.

\begin{IndHyp}

\noindent We assume that for all $k<n$ the maps $E_{k}$ are

\begin{itemize}
\item properly defined on the whole of $\fps{\mathcal{F}_{\loc}(\Em)}{\hbar}^{\otimes k}$,
\item symmetric:\[
\forall\pi\in\mathbb{S}(k):\quad E_{k}(F_{\pi(1)}\otimes\cdots\otimes F_{\pi(k)})=E_{k}(F_{1}\otimes\cdots\otimes F_{k})\,,\]

\item and fulfill Euclidean causality, i.e. Condition \ref{con:Eucl-Causality},
for all $k<n$.
\end{itemize}
\end{IndHyp}

This already determines the $n^{\mathrm{th}}$ order maps $E_{n}$
uniquely up to the thin diagonal:

\begin{proposition}\label{pro:Uniqueness-up-to-Diagonal}Let $E_{k}:\fps{\mathcal{F}_{\loc}(\Em)}{\hbar}^{\otimes k}\rightarrow\fps{\mathcal{F}(\Em)}{\hbar}$
fulfill the induction hypothesis for all $k<n$. Then the $n^{\mathrm{th}}$
order map \[
\begin{array}{rccl}
E_{n}: & \fps{\mathcal{F}_{\loc}(\Em)}{\hbar}^{\otimes n} & \rightarrow & \fps{\mathcal{F}(\Em)}{\hbar}\\
 & \sum_{i}F_{1}^{i}\otimes\cdots\otimes F_{n}^{i} & \mapsto & \sum_{i}E_{n}(F_{1}^{i}\otimes\cdots\otimes F_{n}^{i})\,.\end{array}\]
is uniquely determined for all functional tensors, $\sum_{i}F_{1}^{i}\otimes\cdots\otimes F_{n}^{i}$,
with \[
\bigcup_{i}\supp(F_{1}^{i}\otimes\cdots\otimes F_{n}^{i})\cap\Diag(\Em^{n})=\emptyset\,.\]
\end{proposition}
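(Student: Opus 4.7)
The plan is to show that the Euclidean causality condition, combined with the induction hypothesis, determines $E_n$ uniquely away from the thin diagonal by reducing each functional tensor to a sum of tensors that admit a non-trivial separation of supports. The main structural input is the observation that for any point $\vec{x} = (x_1,\dots,x_n)\in\Em^n\setminus\Diag(\Em^n)$ there exists a non-trivial partition $\{1,\dots,n\} = I\sqcup I^c$ such that the cluster $\{x_i : i\in I\}$ is separated from $\{x_j : j\in I^c\}$; equivalently, the family of open sets
\[
U_I := \bigl\{\vec{x}\in\Em^n : \{x_i\}_{i\in I}\ \text{and}\ \{x_j\}_{j\in I^c}\ \text{have disjoint neighborhoods}\bigr\},
\qquad \emptyset\neq I\subsetneq\{1,\dots,n\},
\]
covers $\Em^n\setminus\Diag(\Em^n)$.

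First I would exploit this cover and a smooth partition of unity on $\Em$ to decompose each factor $F_k^i$ of the given local functional tensor into a finite sum of local functionals supported in arbitrarily small regions. Because supports are compact and the combined support of the tensor avoids $\Diag(\Em^n)$, after sufficient refinement the tensor becomes a finite sum of pure tensors $G_1\otimes\cdots\otimes G_n$ for which some non-trivial partition $I\sqcup I^c$ achieves
\[
\Bigl(\bigcup_{i\in I}\supp(G_i)\Bigr)\cap\Bigl(\bigcup_{j\in I^c}\supp(G_j)\Bigr) = \emptyset.
\]
On each such summand I would invoke Euclidean causality (Condition \ref{con:Eucl-Causality}) to set
\[
E_n(G_1\otimes\cdots\otimes G_n)
:= E_{|I|}\Bigl(\bigotimes_{i\in I}G_i\Bigr)\,\dE\,E_{|I^c|}\Bigl(\bigotimes_{j\in I^c}G_j\Bigr),
\]
where the right-hand side is already fixed by the induction hypothesis (both $E_{|I|}$ and $E_{|I^c|}$ are defined because $|I|,|I^c|<n$), and the product $\dE$ is well-defined by Proposition \ref{pro:Partial-Algebra-of-Functionals} thanks to the disjointness of supports.

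The decisive step, and the main obstacle, is to verify that the resulting value of $E_n$ is independent of all the auxiliary choices: the chosen partition of unity, the chosen refinement, and the chosen index partition $I$ (in general several admissible choices exist for one summand). For the independence of $I$ I would argue as follows: given two valid partitions $I$ and $J$ for the same tensor $G_1\otimes\cdots\otimes G_n$, I would further refine the tensor using a second partition of unity so that every resulting summand is simultaneously compatible with a common finer partition $\{I\cap J,\ I\cap J^c,\ I^c\cap J,\ I^c\cap J^c\}$. Iterated application of the factorization property then yields a product of three or four $E_k$'s with $k<n$ and pairwise disjoint supports, and the equality of the two ways of grouping this product follows from commutativity and associativity of $\dE$ on its domain, together with the symmetry part of the induction hypothesis. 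Independence of the partition of unity is a standard consequence of the linearity of $E_n$ and of the compatibility just established. Once well-definedness is shown, uniqueness is automatic: any extension of $E_n$ satisfying Euclidean causality must coincide with this construction on every admissible tensor, so $E_n$ is fixed on the whole region where the combined support misses $\Diag(\Em^n)$.
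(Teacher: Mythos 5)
Your proposal is correct and follows essentially the same route as the paper: the cover of $\Em^{n}\backslash\Diag(\Em^{n})$ by the sets $U_{I}$, the definition of $E_{n}^{I}$ via the factorization of Condition \ref{con:Eucl-Causality}, the consistency on overlaps $U_{I}\cap U_{J}$ via refinement to $I\cap J$, $I\cap J^{c}$, $I^{c}\cap J$, $I^{c}\cap J^{c}$ together with commutativity and associativity of $\dE$, and the decomposition of the local factors $F_{k}^{i}$ (rather than of the resulting non-local functionals) into pieces of small support -- the latter being exactly the point where the paper invokes \citep[Lem.~3.2]{Brunetti2009} in place of the distributional partition-of-unity gluing. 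The only items the paper's proof additionally records, in order to propagate the induction hypothesis, are the symmetry of $E_{n}^{0}$ and its Euclidean causality at order $n$.
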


\begin{proof}Condition \ref{con:Eucl-Causality} makes it possible
to follow closely the proof of \citep{Brunetti2000}. Let $\mathcal{I}=\left\{ I\subsetneq\left\{ 1,\dots,n\right\} \right\} $
and define neighborhoods\begin{equation}
U_{I}:=\left\{ \left(x_{1},\dots,x_{n}\right)\subset\Em^{n}\backslash\Diag(\Em^{n}):\, x_{i}\neq x_{j}\,\forall i\in I,\forall j\in I^{c}\right\} \,.\label{eq:Neighborhoods-UI}\end{equation}
Then $\left\{ U_{I}:\, I\in\mathcal{I}\right\} $ is a cover for $\Em^{n}\backslash\Diag(\Em^{n})$,
that is\begin{equation}
\bigcup_{I\in\mathcal{I}}U_{I}=\Em^{n}\backslash\Diag(\Em^{n})\,.\label{eq:Cover-UI}\end{equation}
The inclusion $\bigcup_{I\in\mathcal{I}}U_{I}\subset\Em^{n}\backslash\Diag(\Em^{n})$
is obvious. To show the inclusion in the other direction let $\left(x_{1},\dots,x_{n}\right)\in\Em^{n}\backslash\Diag(\Em^{n})$.
Then for at least one pair $\left(i,j\right)$ we have that $x_{i}\neq x_{j}$.
Defining $I=\left\{ k\in\left\{ 1,\dots n\right\} :\, x_{k}=x_{i}\right\} $,
we have $\left(x_{1},\dots,x_{n}\right)\in U_{I}$, hence the inclusion
in the opposite direction.

Now that we dispose of the cover $\left\{ U_{I},\, I\in\mathcal{I}\right\} $,
observe the equivalence of the assertions\[
\forall i\in I,\forall j\in I^{c}:\quad\supp(F_{i})\cap\supp(F_{j})=\emptyset\]
and\[
\supp(F_{1}\otimes\cdots\otimes F_{n})\subset U_{I}\,.\]
By using the induction hypothesis, we are able to define $n$-fold
time-ordered products on $U_{I}$, for all ${F_{1}\otimes\cdots\otimes F_{n}\in\fps{\mathcal{F}(\Em)}{\hbar}^{\otimes n}}$
with ${\supp(F_{1}\otimes\cdots\otimes F_{n})\subset U_{I}}$ we set:
\begin{align}
E_{n}^{I}(F_{1}\otimes\cdots\otimes F_{n}):=E_{\left|I\right|}(\bigotimes_{i\in I}F_{i})\dE E_{\left|I^{c}\right|}(\bigotimes_{j\in I^{c}}F_{j})\,.\label{eq:Def-EnI-on-UI}\end{align}
Where the right hand side is well-defined since the maps $E_{\left|I\right|}$
for $\left|I\right|<n$ have already been defined by assumption and
${\supp(F_{1}\otimes\cdots\otimes F_{n})\subset U_{I}}$ implies ${\supp(E_{\left|I\right|}(\bigotimes_{i\in I}F_{i}))\cap\supp(E_{\left|I^{c}\right|}(\bigotimes_{j\in I^{c}}F_{j}))=\emptyset}$.

We have to make sure that on the overlaps $U_{I}\cap U_{J}$, the
maps $E_{n}^{I}$ and $E_{n}^{J}$ coincide:\begin{equation}
E_{n}^{I}\Big|_{U_{I}\cap U_{J}}=E_{n}^{J}\Big|_{U_{I}\cap U_{J}}\,.\label{eq:EnI-Sheaf-consistency}\end{equation}
Again from the induction hypothesis it follows that for all ${F_{1}\otimes\cdots\otimes F_{n}}$
with ${\supp(F_{1}\otimes\cdots\otimes F_{n})\subset U_{I}\cap U_{J}}$
we have:\[
E_{\left|I\right|}(\bigotimes_{i\in I}F_{i})=E_{\left|I\cap J\right|}(\bigotimes_{k\in I\cap J}F_{k})\dE E_{\left|I\cap J^{c}\right|}(\bigotimes_{l\in I\cap J^{c}}F_{l})\]
and analogously for $E_{\left|J\right|}$. Hence we have{\allowdisplaybreaks\begin{align*}
E_{n}^{I}(F_{1}\otimes\cdots\otimes F_{n}) & =E_{\left|I\right|}(\bigotimes_{i\in I}F_{i})\dE E_{\left|I^{c}\right|}(\bigotimes_{j\in I^{c}}F_{j})\\
 & =E_{\left|I\cap J\right|}\dE E_{\left|I\cap J^{c}\right|}\dE E_{\left|I^{c}\cap J\right|}\dE E_{\left|I^{c}\cap J^{c}\right|}\\
 & =E_{\left|I\cap J\right|}\dE E_{\left|I^{c}\cap J\right|}\dE E_{\left|I\cap J^{c}\right|}\dE E_{\left|I^{c}\cap J^{c}\right|}\\
 & =E_{\left|J\right|}(\bigotimes_{i\in I}F_{i})\dE E_{\left|J^{c}\right|}(\bigotimes_{j\in J^{c}}F_{j})\\
 & =E_{n}^{J}(F_{1}\otimes\cdots\otimes F_{n})\,,\end{align*}
}where we used the symmetry of $\dE$ and have omitted the arguments
in the second and third row. Observe that if $I\cap J=\emptyset$,
the argument is still valid by (\ref{eq:Induction-basis}).

The individual time-ordered products $E_{n}^{I}$ defined on the sets
of the open cover $\left\{ U_{I}:\, I\in\mathcal{I}\right\} $ now
need to be {}``glued together'' to give \emph{one} time-ordered
product $E_{n}^{0}$ on $\Em^{n}\backslash\Diag(\Em^{n})$. A standard
way to achieve this, in the case when the time-ordered products are
distributions, is to introduce a partition of unity $\left\{ \chi_{I}:\, I\in\mathcal{I}\right\} $
subordinate to $\left\{ U_{I}:\, I\in\mathcal{I}\right\} $, and to
define the \emph{unique} time-ordered product on $\Em^{n}\backslash\Diag(\Em^{n})$
as the as the weighted sum of the individual time-ordered products
on $U_{I}$ weighted with $\chi_{I}$; see \citep[Sec. 4]{Brunetti2000}
for details. Observe, however, that in contrast to \citep{Brunetti2000}
the time-ordered products $\left\{ E_{n}^{I}(F_{1}\otimes\cdots\otimes F_{n}):\, I\in\mathcal{I}\right\} $
we are dealing with here are functionals on $\mathcal{C}(\Em)$ rather
than distributions on $\Em^{n}\backslash\Diag(\Em^{n})$. In particular,
there is no ad~hoc notion of a product of the functional $E_{n}^{I}(F_{1}\otimes\cdots\otimes F_{n})$
with a smooth function, $\chi_{I}$ say. Hence for the gluing of $\left\{ E_{n}^{I},\, I\in\mathcal{I}\right\} $
we cannot use the standard method of \citep{Brunetti2000}. Instead
we implement an argument given in \citep{Brunetti2009}, which, as
well as the original reasoning, is conclusive only for local functionals.
Let $\sum_{i}F_{1}^{i}\otimes\cdots\otimes F_{n}^{i}\in\fps{\mathcal{F}_{\loc}(\Em)}{\hbar}^{\otimes n}$
such that\begin{equation}
\forall i:\,\supp(F_{1}^{i}\otimes\cdots\otimes F_{n}^{i})\cap\Diag(\Em^{n})=\emptyset\,,\label{eq:LocFuncSuppOutsideDiag}\end{equation}
by abuse of notation we write $\sum F_{1}\otimes\cdots\otimes F_{n}\in\mathcal{F}_{\loc}(\Em)^{\otimes n}\backslash\Diag(\Em^{n})$
in this case. We now want to define the product $E_{n}$ for those
elements of $\mathcal{F}_{\loc}(\Em)^{\otimes n}\backslash\Diag(\Em^{n})$
whose support is not contained in any neighborhood of the cover $\left\{ U_{I},\, I\in\mathcal{I}\right\} $,
$\forall I\in\mathcal{I}$: $\supp(\sum F_{1}\otimes\cdots\otimes F_{n})\nsubseteq U_{I}$.
The crucial fact, to be shown below, is that any element in $\mathcal{F}_{\loc}(\Em)^{\otimes n}\backslash\Diag(\Em^{n})$
can be written as a finite sum of tensor products of local functionals,
which are fully supported inside some neighborhood $U_{I}$. For these
tensor products the map $E_{n}^{I}$ is already defined by (\ref{eq:Def-EnI-on-UI}).
It is unique due to the sheaf property (\ref{eq:EnI-Sheaf-consistency}).
This definition is then extended to the sum by linearity. So what
remains to be shown is the decomposition property for $\mathcal{F}_{\loc}(\Em)^{\otimes n}\backslash\Diag(\Em^{n})$.
Let $\sum_{i}F_{1}^{i}\otimes\cdots\otimes F_{n}^{i}\in\mathcal{F}_{\loc}(\Em)^{\otimes n}\backslash\Diag(\Em^{n})$,
then by (\ref{eq:LocFuncSuppOutsideDiag}) we have that\begin{equation}
\forall i:\;\bigcap_{k=1}^{n}\supp(F_{k}^{i})=\emptyset\,.\label{eq:SupportsDoNotIntersect}\end{equation}
The $F_{k}^{i}$ are local functionals, and hence can be written as
a finite sum of local functionals of arbitrarily small support (cf.
\citep[Lem. 3.2]{Brunetti2009}),%
\footnote{Although the definition of a local functional in \citep{Brunetti2009}
differs from the one given in this article, it can be shown that both
definitions are equivalent, see Appendix \ref{app:Local-Functionals}
and also \citep{BrunettiFredenhagenRibeiro2009}. Hence the results
of \citep{Brunetti2009} on local functionals, and Lemma 3.2 in particular,
are applicable in our context.%
}\begin{equation}
F_{k}^{i}=\sum_{r_{i,k}}s^{r_{i,k}}F_{k}^{i,r_{i,k}}\,,\quad s^{r_{i,k}}\in\left\{ +,-\right\} \,.\label{eq:LocFuncExpansion}\end{equation}
Because of (\ref{eq:SupportsDoNotIntersect}) the supports $\supp(F_{k}^{i,r_{i,k}})$
can be chosen in such a way that at most $\left(n-1\right)$ of them
intersect. To be more precise this means that for each pair $\left(i,r_{i}\right)$,
$r_{i}\in\mathbb{N}^{n}$ there is some index set $I_{\left(i,r_{i}\right)}\in\mathcal{I}$
such that\begin{equation}
\supp(F_{1}^{i,r_{i,1}}\otimes\cdots\otimes F_{n}^{i,r_{i,n}})\subset U_{I_{\left(i,r_{i}\right)}}\,,\label{eq:SupportInUI}\end{equation}
see Figure \ref{fig:Deviding-Supports-of-Local-Functionals}. Since
the $n$-fold Euclidean time-ordered product is uniquely defined on
these neighborhoods, we can define for any element of $\mathcal{F}_{\loc}(\Em^{n})^{\otimes n}\backslash\Diag(\Em^{n})$:\[
E_{n}^{0}(\sum_{i}F_{1}^{i}\otimes\cdots\otimes F_{n}^{i}):=\sum_{i,r_{i}}\left(\prod_{k=1}^{n}s^{r_{i,k}}\right)E_{n}^{I_{\left(i,r_{i}\right)}}(F_{1}^{i,r_{i,1}}\otimes\cdots\otimes F_{n}^{i,r_{i,n}})\,.\]

Thus we have reached a definition of the Euclidean time-ordered product
up to the thin diagonal. We introduce the notation $E_{n}^{0}$ for
this product in order to distinguish it from its extension to the
whole space, we aim at constructing. For the first part of the induction,
i.e. Proposition \ref{pro:Uniqueness-up-to-Diagonal}, it remains
to be shown that the definition is independent of the choice of the
expansion (\ref{eq:LocFuncExpansion}), that the maps $E_{n}^{0}$
are symmetric and that they fulfill Euclidean causality (Condition
\ref{con:Eucl-Causality}) for $k=n$.

\emph{Independence~of~expansion~(\ref{eq:LocFuncExpansion}).}
Taking another expansion, also fulfilling (\ref{eq:SupportInUI}),
corresponds to taking different index sets $I_{\left(i,r_{i}\right)}$,
i.e. different neighborhoods, for the definition of $E_{n}^{0}$.
However the Euclidean time-ordered product is uniquely defined on
the intersections of these neighborhoods due to the sheaf property
(\ref{eq:EnI-Sheaf-consistency}).

\emph{Symmetry.} For the definition of the maps $E_{n}^{0}$ for permuted
arguments one can take the maps $E_{n}^{I_{\pi(r)}}$ defined on the
neighborhoods $U_{I_{\pi(r)}}$, \begin{align*}
E_{n}^{0}(F_{\pi(1)}\otimes\cdots\otimes F_{\pi(n)}) & =\sum_{r=\left(r_{k}\right)}\left(\prod_{k=1}^{n}s^{r_{k}}\right)E_{n}^{I_{\pi(r)}}(F_{\pi(1)}^{r_{\pi(1)}}\otimes\cdots\otimes F_{\pi(n)}^{r_{\pi(n)}})\\
 & =\sum_{\pi(r)=\left(r_{\pi(k)}\right)}\left(\prod_{k=1}^{n}s^{r_{k}}\right)E_{n}^{I_{r}}(F_{1}^{r_{1}}\otimes\cdots\otimes F_{n}^{r_{n}})\\
 & =E_{n}^{0}(F_{1}\otimes\cdots\otimes F_{n})\,.\end{align*}

\emph{Euclidean~causality.} If for all $i\in I$ and for all $j\in I^{c}$
we have $\supp(F_{i})\cap\supp(F_{j})=\emptyset$, then $\supp(F_{1}\otimes\cdots\otimes F_{n})\subset U_{I}$
and\[
E_{n}^{0}(F_{1}\otimes\cdots\otimes F_{n})=E_{n}^{I}(F_{1}\otimes\cdots\otimes F_{n})=E_{\left|I\right|}(\bigotimes_{i\in I}F_{i})\dE E_{\left|I^{c}\right|}(\bigotimes_{j\in I^{c}}F_{j})\,.\]

\end{proof}

\begin{figure}[h]
\includegraphics[width=5cm,keepaspectratio]{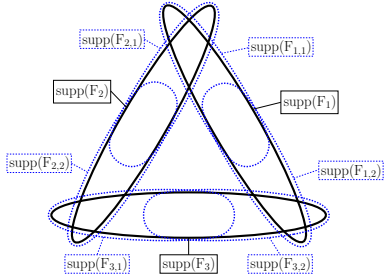}

\caption{\label{fig:Deviding-Supports-of-Local-Functionals}Dividing the supports
of local functionals helps with defining $E_{n}^{0}$. In the picture
$\supp(F_{1}\otimes F_{2}\otimes F_{3})\cap\Diag(\Em^{3})=\emptyset$
but there is no index set $I\subset\left\{ 1,2,3\right\} $ such that
$\supp(F_{1}\otimes F_{2}\otimes F_{3})\subset U_{I}$. However, for
any three of the functionals of smaller support such a neighborhood
can always be found, $\forall\left(i,j,k\right)$: $\supp(F_{1,i}\otimes F_{2,j}\otimes F_{3,k})\subset U_{I_{\left(i,j,k\right)}}$.}
\end{figure}

Up to here we have constructed the $n$-fold Euclidean time-ordered
products ${E_{n}^{0}(\sum F_{1}\otimes\cdots\otimes F_{n})}$ up to
the thin diagonal, under the assumption that the maps $E_{k}$ where
already defined on the whole space $\Em^{k}$ for all $k<n$. So what
remains to be done, is to prove that for each $n\in\mathbb{N}$ the
map $E_{n}^{0}$ can be extended to the whole space $\Em^{n}$.

\subsection{\label{sub:Extension-to-the-whole-space}Extension to the whole space}

In their original work \citep{Epstein1973}, Epstein and Glaser reduce
the problem of extending the algebra-valued distributions  $T(\L(x_{1})\dots\L(x_{q}))$
(adapting to their notation here) to an extension problem for scalar
distributions by expanding the $n$-fold time-ordered product in terms
of Wick products of the fields (cf. formulas (42)/(43) loc. cit.).
We will show in this section that, using the tools of \citep{Brunetti2009},
an analogous expansion can be done in the Euclidean framework. The
fact that we regard local functionals is essential.

\subsubsection{Expansion Formula of Epstein-Glaser in (Euclidean) pAQFT}

The definition of local functionals (Def. \ref{def:Local-Functional})
implies that the integral kernel of the $n^{\mathrm{th}}$ functional
derivative of any $F\in\mathcal{F}_{\loc}(\mathbb{E})$ can be written
as (cf. \citep{Brunetti2000})\begin{equation}
F^{\left(n\right)}(\varphi)(x_{1},\dots,x_{n})=\sum_{k}f_{\varphi}^{n,k}(x)\, p_{k}(\partial_{\vec{r}})\,\delta(\vec{r})\,,\quad f_{\varphi}^{n,k}\in\D(\Diag(\Em^{n}))\cong\D(\Em)\label{eq:local-functional-CM-Rel-Coord}\end{equation}
where $x=\frac{1}{n}\sum_{i=1}^{n}x_{i}$ is the {}``center of mass''-coordinate,
$\vec{r}=\left(r_{1},\dots,r_{n-1}\right)$ are relative coordinates
and $\left(p_{k}\right)_{k\in\mathbb{N}}$ is a basis of homogeneous,
symmetric polynomials in $\left(n-1\right)$ variables.%
\footnote{We do not specify the relative coordinates any further, but will assume
them to be chosen in such a way that the product measure on $\Em^{n}$
is invariant,\[
dx_{1}(x,\vec{r})\cdots dx_{n}(x,\vec{r})=dx\, dr_{1}\cdots dr_{n-1}\equiv dx\, d\vec{r}\,,\]
which is always possible. For an explicit choice see \citep[Prop. 3.1]{DuetschFredenhagen2004}. %
} Equation (\ref{eq:local-functional-CM-Rel-Coord}) is equivalent
to saying that the functional derivatives, $F^{\left(n\right)}(\varphi)$,
can be restricted to surfaces which are transversal to the thin diagonal
$\Diag(\Em^{n})$, which is implied by the second condition {[LF-2]}
of Def. \ref{def:Local-Functional} (cf. \citep[Cor. 8.2.7]{Hoermander2003},
\citep[Lem. 6.1]{Brunetti2000}). By using (\ref{eq:local-functional-CM-Rel-Coord})
we find a compact formula for the Taylor expansion up to order $N$
of a local functional at some reference field configuration $\varphi_{0}$:\begin{align}
F_{\varphi_{0}}^{\left[N\right]}(\varphi) & =\sum_{n=0}^{N}\frac{1}{n!}\left\langle F^{\left(n\right)}(\varphi_{0}),\left(\varphi-\varphi_{0}\right)^{\otimes n}\right\rangle \nonumber \\
 & =\sum_{n=0}^{N}\sum_{k}\left\langle f_{\varphi_{0}}^{n,k}(x),\left\langle \delta(\vec{r}),p_{k}(-\partial_{\vec{r}})\frac{\left(\varphi-\varphi_{0}\right)^{\otimes n}(x,\vec{r})}{n!}\right\rangle \right\rangle \nonumber \\
 & =\sum_{n=0}^{N}\sum_{k}\left\langle f_{\varphi_{0}}^{n,k}(x),A_{\varphi-\varphi_{0}}^{n,k}(x)\right\rangle \,,\label{eq:LocalFunctional-TayApprox}\end{align}
where we introduced the so-called \emph{balanced fields} (cf. \citep{Buchholz2002})\[
A_{\varphi-\varphi_{0}}^{n,k}(x)=\left\langle \delta(\vec{r}),p_{k}(-\partial_{\vec{r}})\frac{\left(\varphi-\varphi_{0}\right)^{\otimes n}(x,\vec{r})}{n!}\right\rangle \,.\]

Following \citep{Brunetti2009} we impose two further conditions on
the Euclidean $S$-matrix \begin{equation}
S_{\mathrm{E}}(F):=\exp_{\dE}(F)\equiv\sum_{n=0}^{\infty}\frac{1}{n!}E_{n}(F^{\otimes n})\,.\label{eq:SMatrixEucl}\end{equation}
Let $F\in\mathcal{F}_{\loc}(\Em)$. The first condition, called \emph{$\varphi$-locality},
states that the $S$-matrix $S_{\mathrm{E}}(F)(\varphi_{0})$ at a
given field configuration $\varphi_{0}$ should depend only on the
Taylor expansion of $F$ around $\varphi_{0}$.

\begin{condition}[$\varphi$-locality]\label{con:Phi-Locality} $S_{\mathrm{E}}(F)(\varphi_{0})=S_{\mathrm{E}}(F_{\varphi_{0}}^{\left[N\right]})(\varphi_{0})+\mathcal{O}(\hbar^{N+1})$.\end{condition}

This condition makes it possible to treat not only polynomial, but
also more general functions of the fields as interaction functionals
in pAQFT. Since the $S$-matrix in perturbation theory is defined
in terms of a formal power series in $\hbar$, according to Condition~\ref{con:Phi-Locality}
$S_{\mathrm{E}}(F)(\varphi_{0})$ is (up to the renormalization freedom)
fully determined by the Taylor expansion of $F$ around $\varphi_{0}$,
because the additional terms are required to be of sufficiently high
order in $\hbar$. Non-polynomial interactions where excluded in the
original treatment by Epstein and Glaser. However, their consistent
incorporation in the pertubative treatment of QFT is desireable not
only from the viewpoint of non-polynomial models like, for instance,
the Sine-Gordon model. They also seem to be necessary in a perturbative
treatment of super-symmetric extensions of the standard model.%
\footnote{Private communication with K. Fredenhagen. See also \citep{GrigoreScharf2008,Sibold2008}.%
} Observe, however, that Condition~\ref{con:Phi-Locality} is a condition
within perturbation theory, which aims at a consistent treatment of
the topic, rather than an extension to the non-perturbative regime.

The second condition, \emph{field independence}, states that $S$
should only depend implicitly (i.e. via the interaction) on the field
configuration $\varphi$, which makes the chain rule easily applicable:

\begin{condition}[Field independence]\label{con:Field-Independence}
$\forall g\in\E(\Em)$: $\left\langle \frac{\delta S_{\mathrm{E}}(F)}{\delta\varphi},g\right\rangle =S_{\mathrm{E}}^{\left(1\right)}(F)\left\langle \frac{\delta F}{\delta\varphi},g \right\rangle $.\end{condition}

Using Conditions \ref{con:Phi-Locality} and \ref{con:Field-Independence}
one can consistently insert the approximation (\ref{eq:LocalFunctional-TayApprox})
for $F$ into (\ref{eq:SMatrixEucl}). This, as shown below, reduces
the extension problem of the functionals $E_{n}^{0}$ to that of certain
scalar coefficients in the {}``Wick expansion'' of $E_{n}(F^{\otimes n})$,
\begin{align*}
E_{n}(F^{\otimes n})(\varphi) & =E_{n}(F_{\varphi_{0}}^{\left[N\right]}\otimes\cdots\otimes F_{\varphi_{0}}^{\left[N\right]})(\varphi)+\mathcal{O}\left(\hbar^{n\left(N+1\right)}\right)\\
 & =\sum_{l_{j},k_{j}}E_{n}(\left\langle f_{\varphi_{0}}^{l_{1},k_{1}}(x_{1}),A_{\varphi-\varphi_{0}}^{l_{1},k_{1}}(x_{1})\right\rangle \otimes\cdots\otimes\left\langle f_{\varphi_{0}}^{l_{n},k_{n}}(x_{n}),A_{\varphi-\varphi_{0}}^{l_{n},k_{n}}(x_{n})\right\rangle )\\
 & =\sum_{l_{j},k_{j}}\int dx_{1}\cdots dx_{n}\,\prod_{j=0}^{n}f_{\varphi_{0}}^{l_{j},k_{j}}(x_{j})\, E_{n}(A_{\varphi-\varphi_{0}}^{l_{1},k_{1}}(x_{1})\otimes\cdots\otimes A_{\varphi-\varphi_{0}}^{l_{n},k_{n}}(x_{n}))\,.\end{align*}
Setting $\varphi=\varphi_{0}$ and defining the scalar distributions\begin{equation}
t^{l,k}(x_{1},\dots,x_{n}):=E_{n}(A_{\varphi-\varphi_{0}}^{l_{1},k_{1}}(x_{1})\otimes\cdots\otimes A_{\varphi-\varphi_{0}}^{l_{n},k_{n}}(x_{n}))\bigg|_{\varphi=\varphi_{0}}\,,\quad l,k\in\mathbb{N}^{n}\,,\label{eq:ScalarTProd}\end{equation}
we arrive at\begin{equation}
E_{n}(F^{\otimes n})(\varphi_{0})=\sum_{l_{j},k_{j}}\int dx_{1}\cdots dx_{n}\,\prod_{j=0}^{n}f_{\varphi_{0}}^{l_{j},k_{j}}(x_{j})\, t^{l,k}(x_{1},\dots,x_{n})\,.\label{eq:Wick-expansion}\end{equation}
Hence the problem of defining the coefficients of the $S$-matrix,
$E_{n}(F^{\otimes n})(\varphi_{0})$, for local interactions $F\in\mathcal{F}_{\loc}(\mathbb{E})$
is reduced to extending the coefficients $t^{l,k}$ (\ref{eq:ScalarTProd}).%
\footnote{In the case of polynomial interactions in MQFT formula (\ref{eq:Wick-expansion})
reduces to the familiar Wick expansion formula of Epstein and Glaser
(cf. \citep[Ex. on p. 19]{Brunetti2009}).%
} These are scalar distributions, which in general are well-defined
up to the thin diagonal. Observe however, that they only depend on
differences of the variables $\left\{ x_{1},\dots,x_{n}\right\} $,
since the fundamental solution $P(x-y)$ with respect to which the
Euclidean time-ordered products $E_{n}$ were defined depend only
on relative distances. Hence in flat Euclidean space the distributions
$t^{l,k}$ are translation invariant (along diagonal directions),
i.e. under transformations $\left(x_{1},\dots,x_{n}\right)\mapsto\left(x_{1}+a,\dots,x_{n}+a\right)$.
Consequently, for the extension to the thin diagonal it suffices to
define them at the origin.

\begin{example}To illustrate this procedure in a graphical setting
let us regard the following example\[
\begin{array}{c}
\\\mbox{\FthreeGoneHtwoF}\\
\\\end{array}\qquad\qquad\qquad\begin{array}{rl}
F(\varphi) & =\frac{1}{5!}\int dx\, f(x)\left(\varphi(x)\right)^{5}\,,\\
G(\varphi) & =\frac{1}{4!}\int dx\, g(x)\left(\varphi(x)\right)^{4}\,,\\
H(\varphi) & =\frac{1}{3!}\int dx\, h(x)\left(\varphi(x)\right)^{3}\,,\end{array}\]
with $f,g,h\in\D(\Em)$ test functions of compact support. The integral
kernels of the functional derivatives are given by:\begin{align*}
\left(F^{\left(5\right)}(\varphi)\right)(x_{1},\dots,x_{5}) & =f(x_{1})\,\delta(x_{1}-x_{2})\cdots\delta(x_{4}-x_{5})\\
\left(G^{\left(4\right)}(\varphi)\right)(y_{1},\dots,y_{4}) & =g(y_{1})\,\delta(y_{1}-y_{2})\cdots\delta(y_{3}-y_{4})\\
\left(H^{\left(3\right)}(\varphi)\right)(z_{1},\dots,z_{3}) & =h(z_{1})\,\delta(z_{1}-z_{2})\,\delta(z_{2}-z_{3})\,.\end{align*}
Hence the corresponding amplitude for the above graph is given by\[
\int dx\, dy\, dz\,\left(P(x-y)\right)^{3}\left(P(x-z)\right)^{2}P(y-z)\, f(x)\, g(y)\, h(z)\,,\]
where in the first induction step the translation invariant distributions
$\left(P(x-y)\right)^{3}$ and $\left(P(x-z)\right)^{2}$ have to
be extended to the origin, giving renormalized distributions $\left(P(x-y)\right)_{\mathrm{ren}}^{3}$
and $\left(P(x-z)\right)_{\mathrm{ren}}^{2}$, respectively. In the
second step the domain of the equally translation invariant distribution
\[
t_{0}(x,y,z)=\left(P(x-y)\right)_{\mathrm{ren}}^{3}\left(P(x-z)\right)_{\mathrm{ren}}^{2}P(y-z)\]
 has to be extended to give the renormalized amplitude $t_{\mathrm{ren}}$.\end{example}

\subsubsection{Extension to the origin}

The extension problem for scalar distributions, is well understood
and can conveniently be discussed in terms of the scaling degree \citep{Steinmann1971}
of the corresponding distribution.

\begin{definition}[scaling degree; cf. \cite{Brunetti2000}] Let $\dim(\Em)=d$
and define \[
\begin{array}{rccl}
\Lambda: & \mathbb{R}_{+}\times\D(\Em) & \rightarrow & \D(\Em)\\
 & \left(\lambda,\phi\right) & \mapsto & \phi^{\lambda}:=\lambda^{-d}\phi(\lambda^{-1}\cdot)\end{array}\]
to be the action of the positive real numbers on test functions. This
induces, via the pullback, the action on distributions. For $t\in\D'(\Em)$
we have:\[
t_{\lambda}(\phi):=\left(\Lambda^{*}t\right)(\phi)=t(\phi^{\lambda})\,.\]
The scaling degree $\sd(t)$ of a distribution $t$ (with respect
to the origin) is defined to be\[
\sd(t):=\inf\left\{ \omega'\in\mathbb{R}:\,\lim_{\lambda\searrow0}\lambda^{\omega'}t_{\lambda}=0\in\D'(\Em)\right\} \,.\]
\end{definition}

The extension of the scalar distributions is governed by the following
theorem due to Brunetti and Fredenhagen

\begin{theorem}[cf. \cite{Brunetti2000}] Let $t_{0}\in\D'(\mathbb{R}^{d}\backslash\left\{ 0\right\} )$
have scaling degree $\sd(t_{0})$ with respect to the origin. Let 

\begin{itemize}
\item $\sd(t_{0})<d$. Then there exists a unique extension $t\in\D'(\mathbb{R}^{d})$
of $t_{0}$, i.e. $t(f)=t_{0}(f)$ for all $f\in\D(\mathbb{R}^{d}\backslash\left\{ 0\right\} )$,
which has the same scaling degree, $\sd(t)=\sd(t_{0})$.
\item $d\leq\sd(t_{0})<\infty$. Then there exist several extensions $t\in\D'(\mathbb{R}^{d})$
with $\sd(t)=\sd(t_{0})$, which are uniquely determined by their
values on a finite set of test functions. 
\end{itemize}
\end{theorem}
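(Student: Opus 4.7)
The plan is to treat the two cases separately while observing that both reduce to the same underlying estimate controlling $t_0$ on test functions concentrated near the origin.

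For the first case $\sd(t_0)<d$, I would construct the extension by a cutoff regularization. Fix $\chi\in C^{\infty}(\mathbb{R}^{d})$ with $\chi(x)=0$ for $|x|\leq 1$ and $\chi(x)=1$ for $|x|\geq 2$, and set $\chi_{n}(x):=\chi(nx)$. For any $\phi\in\D(\mathbb{R}^{d})$, the product $\chi_{n}\phi$ lies in $\D(\mathbb{R}^{d}\setminus\{0\})$, so $t_{0}(\chi_{n}\phi)$ is defined. The core claim is that $(t_{0}(\chi_{n}\phi))_{n\in\mathbb{N}}$ is Cauchy: after writing $(\chi_{n}-\chi_{m})\phi$ as a rescaling $n^{-d}\psi_{n}^{1/n}$ of a family $\psi_{n}$ of test functions with support in a fixed annulus bounded away from $0$, the definition of $\sd(t_{0})$ yields, for any $\omega'\in(\sd(t_{0}),d)$, an estimate of the form $|t_{0}(\chi_{n}\phi-\chi_{m}\phi)|\leq C\,n^{\omega'-d}$, which vanishes as $n,m\to\infty$ precisely because $\omega'<d$. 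Setting $t(\phi):=\lim_{n}t_{0}(\chi_{n}\phi)$ then defines a continuous extension to $\D(\mathbb{R}^{d})$, and the same estimate applied to scaled test functions $\phi^{\lambda}$ yields $\sd(t)=\sd(t_{0})$. For uniqueness, any two such extensions differ by a distribution supported at $\{0\}$, hence of the form $\sum_{\alpha}c_{\alpha}\,\partial^{\alpha}\delta$; since $\sd(\partial^{\alpha}\delta)=d+|\alpha|\geq d>\sd(t_{0})$, the difference must vanish.

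For the second case $d\leq\sd(t_{0})<\infty$, set $\omega:=\lfloor\sd(t_{0})-d\rfloor$ and introduce a projector removing the Taylor expansion at the origin up to order $\omega$. Concretely, fix $w\in\D(\mathbb{R}^{d})$ with $w(0)=1$ and $\partial^{\alpha}w(0)=0$ for $1\leq|\alpha|\leq\omega$, and define
\[
(W\phi)(x):=\phi(x)-w(x)\sum_{|\alpha|\leq\omega}\frac{x^{\alpha}}{\alpha!}(\partial^{\alpha}\phi)(0).
\]
A direct computation using the Leibniz rule shows $\partial^{\beta}(W\phi)(0)=0$ for $|\beta|\leq\omega$. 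Writing such a $W\phi$ in the form $\sum_{|\alpha|=\omega+1}x^{\alpha}\psi_{\alpha}$ with smooth $\psi_{\alpha}$, one has $\sd(x^{\alpha}t_{0})\leq\sd(t_{0})-(\omega+1)<d$, so the argument of Case~1 applies to each summand and defines $t_{0}(W\phi)$ unambiguously. Setting $t(\phi):=t_{0}(W\phi)$ gives an extension of $t_{0}$ with $\sd(t)=\sd(t_{0})$. The ambiguity is encoded in the choice of $w$: any two extensions with the stated scaling degree differ by a distribution supported at $\{0\}$ of the form $\sum_{|\alpha|\leq\omega}c_{\alpha}\,\partial^{\alpha}\delta$, a finite-dimensional space parameterised by the $\binom{d+\omega}{\omega}$ coefficients $c_{\alpha}$. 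Fixing the values of $t$ on any dual family of test functions detecting the derivatives $(\partial^{\alpha}\phi)(0)$ for $|\alpha|\leq\omega$ then pins the extension down uniquely, which is exactly the statement of the theorem.

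The main obstacle is the quantitative estimate underlying both cases: translating the abstract convergence $\lambda^{\omega'}t_{0,\lambda}\to 0$ in $\D'(\mathbb{R}^{d}\setminus\{0\})$ into a uniform bound on $t_{0}((\chi_{n}-\chi_{m})\phi)$. This requires singling out a family of seminorms on test functions supported in annuli around the origin and tracking their behaviour under the dilation action $\Lambda^{*}$ — in particular, controlling the boundedness of the rescaled family $\psi_{n}$ above. Once this estimate is in place, existence of the extension, the computation of $\sd(t)$, and the analysis of its ambiguity all follow from the same mechanism, applied either directly (Case~1) or to the Taylor-subtracted test function $W\phi$ (Case~2).
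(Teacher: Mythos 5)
The paper does not prove this theorem at all: it is quoted verbatim (as ``cf.\ Brunetti--Fredenhagen'') and used as a black box to close the induction, so there is no in-paper argument to compare yours against. Your proposal is, in outline, a correct reconstruction of the standard proof from that reference (and from D\"utsch--Fredenhagen): cutoff regularization plus a scaling estimate for $\sd(t_{0})<d$, uniqueness via $\sd(\partial^{\alpha}\delta)=d+\left|\alpha\right|$, and Taylor subtraction $W$ reducing the case $\sd(t_{0})\geq d$ to the first case, with the ambiguity space spanned by $\left\{ \partial^{\alpha}\delta:\left|\alpha\right|\leq\omega\right\}$.

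Two points need tightening. First, in Case~1 the difference $(\chi_{n}-\chi_{m})\phi$ is supported in the annulus $1/m\leq\left|x\right|\leq2/n$, whose inner/outer radius ratio is unbounded as $m/n\to\infty$; it is therefore \emph{not} a single rescaling of a family supported in a fixed annulus. You must telescope dyadically, $\chi_{2^{j+1}}-\chi_{2^{j}}$, bound each term by $C\,2^{-j(d-\omega')}$ using the scaling-degree definition (which requires a uniform-boundedness/Banach--Steinhaus step to turn the weak convergence $\lambda^{\omega'}t_{0,\lambda}\to0$ into a seminorm estimate uniform over the rescaled family), and sum the geometric series. You flag this as the main obstacle, correctly, but the Cauchy estimate as written skips the telescoping. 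Second, in Case~2 the Hadamard-type decomposition $W\phi=\sum_{\left|\alpha\right|=\omega+1}x^{\alpha}\psi_{\alpha}$ produces $\psi_{\alpha}$ that are smooth but \emph{not} compactly supported (the integral remainder $\int_{0}^{1}(1-s)^{\omega}\partial^{\alpha}(W\phi)(sx)\,ds$ spreads the support as $s\searrow0$), so pairing $x^{\alpha}t_{0}$ with $\psi_{\alpha}$ is not immediately licensed; one either inserts a further cutoff equal to $1$ on $\supp(\phi)\cup\supp(w)$ or applies the Case~1 limit construction directly to $t_{0}(\chi_{n}W\phi)$. Neither issue is fatal -- both are handled in the cited literature -- but as written they are genuine gaps in the argument.
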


The freedom in the possible extensions is described by the Stückelberg-\-Peter\-mann
renormalization group, see \citep{Brunetti2009} for a thorough discussion
of the topic. According to the theorem, given a particular extension
$t_{p}$, the most general solution for the extension $t$ of $t_{0}$
reads (cf. \citep{DuetschFredenhagen2004})\begin{equation}
t=t_{p}+\sum_{\left|a\right|\leq\rho(t_{0})}C_{a}\partial^{a}\delta\label{eq:Extension-MostGeneral}\end{equation}
where $\rho(t_{0}):=\sd(t_{0})-d$ denotes the degree of divergence
of $t_{0}$, $a\in\mathbb{N}^{d}$ is a multiindex and $C_{a}\in\mathbb{C}$.
Hence the freedom in the choice of an extension $t$ is governed by
the scaling degree of $t_{0}$.

Observe, however, that the existence of a renormalized time-ordered
product $E_{n}$, given by an extension (\ref{eq:Extension-MostGeneral}),
does not imply that the underlying theory is renormalizable in the
sense of power counting. For the sake of readability we quote here
the classification of renormalizable theories as it was given by Epstein
and Glaser \citep{Epstein1973}. Given an extension of $E_{n}$ of
$E_{n}^{0}$ exists for all orders $n$ of perturbation theory. A
theory is called \emph{renormalizable}, if there is a (finite) upper
bound for the degree of divergence of the $n$-fold time-ordered product,
$\rho(t_{0}^{n})$, which does not depend on the order $n$ of perturbation
theory. The theory is called \emph{unrenormalizable}, if there is
no such bound and it is called \emph{superrenormalizable}, if there
is a certain order $n_{0}$ above which the degree of divergence is
negative, i.e. the extensions $E_{n}$ are unique for $n>n_{0}$.

This closes the Euclidean version of the Epstein-Glaser induction
and at the same time shows that this induction is completely performable
without reference to the star-product structure of Minkowskian pQFT.

\section{Conclusion}

We have shown that the construction of Epstein and Glaser can be adapted
to the Euclidean case. On the one hand side, as asserted in the introduction
this shows that the construction of Epstein and Glaser can generally
be performed without the notion of a star-product. On the other hand
the formalism introduced here gives strong tools for the investigation
of the local properties of Euclidean QFT. Particularly interesting
is the investigation of the relation to other approaches to renormalization,
like for instance the BPHZ renormalization scheme. New results in
this direction have recently been gained by an investigation of certain
examples in momentum space \citep{Falk2009}. 

In principle it is possible to get back non-local objects like the
analytic Schwinger functions of EQFT from the formalism introduced
above. As in the Minkowskian setting of Algebraic Quantum Field Theory
one gets back the correlation functions by evaluating their corresponding
algebraic versions in the vacuum state. In the introduced setting
this is given by the evaluation at $\varphi=0$, \[
\omega_{0}:E_{n}(F_{1}\otimes\cdots\otimes F_{n})\mapsto E_{n}(F_{1}\otimes\cdots\otimes F_{n})\big|_{\varphi=0}\,.\]
Performing the adiabatic limit, provided it exists, gives back the
Schwinger functions of EQFT.

\begin{acknowledgments}
I want to thank K. Fredenhagen for inducing the investigation of the
topic. Many of the ideas leading to the result were suggested by him.
Furthermore I want to thank R. Brunetti, as well as P. Lauridsen Ribeiro
for useful suggestions and remarks. All three of them I'd like to
thank for critical reading of earlier versions of this manuscript.
I want to thank M. Dütsch for pointing out a gap in a previous version
of the proof of Proposition \ref{pro:Uniqueness-up-to-Diagonal}.
This work was financially supported by Deutsche Forschungsgemeinschaft
(GK 602).
\end{acknowledgments}
\appendix

\section{\label{sec:WF-delta}The Wave Front Set of the Dirac Delta Distribution}

Just as an example for the computation of the wave front set of a
given distribution we compute $\WF(\delta(x,y))$. The singular support
of $\delta$ is the diagonal $\Diag(\Em^{2})\equiv\left\{ \left(x,x\right)\in\Em^{2}\right\} $.
Following the definition \citep[Def. 8.1.2]{Hoermander2003} we are
interested in the directions $\left(k,k'\right)\in T_{\left(x,x\right)}^{*}\Em^{2}$
in which the Fourier transform of $\delta$, $\FT(\delta)\equiv\hat{\delta}$,
does not decrease rapidly.\begin{align*}
\hat{\delta}(k,k') & =\frac{1}{\left(2\pi\right)^{n}}\int dx\, dy\, e^{-i\left\langle k,x\right\rangle }e^{-i\left\langle k',y\right\rangle }\delta(x,y)\\
 & =\frac{1}{\left(2\pi\right)^{n}}\int dx\, e^{-i\left\langle k+k',x\right\rangle }=\delta(k+k')\end{align*}
where one can use Fourier's inversion formula to show the last equality.
Hence the Fouriertransform of $\delta(x,y)$ is rapidly decreasing
in all directions except $\left\{ \left(k,k'\right)\in T_{\left(x,x\right)}^{*}\Em^{2}:k+k'=0\right\} $.
The wave front set of $\delta$ is therefore given by\[
\WF(\delta)=\left\{ \left(x,k;x,k'\right)\in T^{*}\Em^{2}:\, k+k'=0\right\} \,.\]

\section{\label{sec:Graphs-and-Tadpoles}Combinatorics: Graphs and Tadpoles}

\subsection*{Graphs}

As is well known, the symmetry of the time-ordered product is conveniently
accounted for by writing its terms as (sums of) graphs. Using Cauchy's
product formula and the Leibniz rule one derives from the formal power
series (\ref{eq:T-Prod-SeriesExpansion}) the following expression
for the threefold Euclidean time-ordered product, whose addends have
a direct interpretation in terms of graphs, {\allowdisplaybreaks\begin{align*}
F\dE G\dE H\hspace{-15mm} & \hspace{15mm}=\sum_{n=0}^{\infty}\frac{\hbar^{n}}{n!}\sum_{m=k}^{n}\sum_{k=0}^{m}{n \choose m}{m \choose k}\left\langle F^{\left(k+m-k\right)}G_{\left(k\right)}^{\left(n-m\right)}H_{\left(n-m\right)\left(m-k\right)}\right\rangle \\
 & =FGH+\hbar\left(\left\langle FG^{\left(1\right)}H_{\left(1\right)}\right\rangle +\left\langle F^{\left(1\right)}GH_{\left(1\right)}\right\rangle +\left\langle F^{\left(1\right)}G_{\left(1\right)}H\right\rangle \right)\\
 & \quad+\hbar^{2}\left(\frac{1}{2}\left\langle FG^{\left(2\right)}H_{\left(2\right)}\right\rangle +\left\langle F^{\left(1\right)}G^{\left(1\right)}H_{\left(1\right)\left(1\right)}\right\rangle +\left\langle F^{\left(1\right)}G_{\left(1\right)}^{\left(1\right)}H_{\left(1\right)}\right\rangle \right.\\
 & \quad\qquad\qquad\qquad\qquad+\left.\frac{1}{2}\left\langle F^{\left(2\right)}GH_{\left(2\right)}\right\rangle +\left\langle F^{\left(2\right)}G_{\left(1\right)}H_{\left(1\right)}\right\rangle +\frac{1}{2}\left\langle F^{\left(2\right)}G_{\left(2\right)}H\right\rangle \right)\\
 & \quad+\cdots\\
 & =\mbox{\FGH}+\hbar\left(\mbox{\FGoneHone}+\mbox{\FoneGHone}+\mbox{\FoneGoneH}\right)\\
 & \quad+\hbar^{2}\left(\frac{1}{2}\,\mbox{\FGtwoHtwo}+\mbox{\FoneGoneHtwo}+\mbox{\FoneGoneoneHone}+\frac{1}{2}\,\mbox{\FtwoGHtwo}+\mbox{\FtwoGoneHone}+\frac{1}{2}\,\mbox{\FtwoGtwoH}\right)+\cdots\end{align*}
}

The symmetry factor $\Sym(\gamma)^{-1}=\frac{1}{n!}{n \choose m}{m \choose k}=\frac{1}{\left(n-m\right)!k!\left(m-k\right)!}$
of a given term is reflected in the graph as (the reciprocal of) the
product of the number of possible permutations of edges which join
the same vertices.%
\footnote{In the graph representing $\left\langle F^{\left(k+m-k\right)}G_{\left(k\right)}^{\left(n-m\right)}H_{\left(n-m\right)\left(m-k\right)}\right\rangle $
there are $k$ lines joining $F$ and $G$, $\left(n-m\right)$ lines
from $G$ to $H$, and $\left(m-k\right)$ edges connecting $H$ with
$F$.%
} This is what remains of the symmetry of the functional derivatives
after convolution with the fundamental solution $P$. The interpretation
in terms of graphs gives a straight-forward generalization to $n$-fold
products: \[
E_{n}(F_{1}\otimes\cdots\otimes F_{n})=\sum_{l=0}^{\infty}\hbar^{l}\sum_{\gamma\in\Gamma(n,l)}\frac{1}{\Sym(\gamma)}\,\gamma\,,\]
where $\Gamma(n,l)$ is the set of graphs with $n$ vertices and $l$
edges, in which each edge $e$ joins two different points, $\mathfrak{s}(e)\neq\mathfrak{r}(e)$
(no tadpoles). The graph with the interactions $\left\{ F_{1},\dots,F_{n}\right\} $
at $n$ vertices and $l_{i,j}$ edges between $F_{i}$ and $F_{j}$
corresponds to the term:\[
\left\langle \left(F_{1}\right)^{\left(l_{1,2}+l_{1,3}+\cdots+l_{1,n}\right)}\cdots\left(F_{k}\right)_{\left(l_{1,k}\right)\cdots\left(l_{k-1,k}\right)}^{\left(l_{k,k+1}+\cdots+l_{k,n}\right)}\cdots\left(F_{n}\right)_{\left(l_{1,n}\right)\cdots\left(l_{n-1,n}\right)}\right\rangle \,.\]
Notice that both the upper and the lower indices add up to the total
number $L$ of edges in the graph,\[
\sum_{i=1}^{n-1}\sum_{j=i+1}^{n}l_{i,j}=\sum_{j=2}^{n}\sum_{i=1}^{j-1}l_{i,j}=\sum_{i<j}l_{i,j}=L\,.\]

\subsection*{Tadpoles}

As asserted in the main part of the article, we want to prove here,
that there are no tadpole terms, i.e. graphs with at least one line
connecting a vertex with itself, in the graph-expansion of $F\dE G$
as defined in (\ref{eq:T-Prod-Operations}). By doing so, we give
the justification for formula (\ref{eq:T-Prod-SeriesExpansion}).

\begin{proposition}There are no tadpole terms contributing to ${F\dE G:=}$
 ${T_{\mathrm{E}}\circ M\circ\left(T_{\mathrm{E}}^{-1}(F)\otimes T_{\mathrm{E}}^{-1}(G)\right)}$,
that is:\[
F\dE G=\sum_{n=0}^{\infty}\frac{\hbar^{n}}{n!}\left\langle F^{\left(n\right)}G_{\left(n\right)}\right\rangle \,.\]
\end{proposition}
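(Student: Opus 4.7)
The plan is to use a two-field book-keeping trick: introduce formal copies $\varphi_F,\varphi_G$ of the field, let $\Gamma_F$ and $\Gamma_G$ denote the operator $\Gamma$ with all derivatives taken with respect to $\varphi_F$ resp.\ $\varphi_G$, and introduce the ``cross'' operator
\[
\Gamma_{FG}:=\int dx\,dy\,P(x,y)\,\frac{\delta}{\delta\varphi_F(x)}\frac{\delta}{\delta\varphi_G(y)}.
\]
The multiplication map $M$ then amounts to the pullback $\varphi_F=\varphi_G=\varphi$. With this setup, the whole statement becomes a formal manipulation of commuting second-order differential operators.

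First I would compute how $\Gamma$ interacts with $M$. A direct application of the Leibniz rule to $\frac{\delta^2}{\delta\varphi(x)\delta\varphi(y)}[FG]$ produces four terms; using the symmetry $P(x,y)=P(y,x)$ the two ``mixed'' terms coincide, so
\[
\Gamma\circ M \;=\; M\circ(\Gamma_F+\Gamma_G+\Gamma_{FG}).
\]
The three operators $\Gamma_F,\Gamma_G,\Gamma_{FG}$ pairwise commute because they act as derivations in disjoint field variables with $\varphi$-independent coefficients. Exponentiating the intertwiner gives, at the level of formal power series in $\hbar$,
\[
T_{\mathrm E}\circ M \;=\; M\circ e^{\hbar\Gamma_F}\,e^{\hbar\Gamma_G}\,e^{\hbar\Gamma_{FG}}.
\]

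Next I would insert this into the definition of $\dE$. Since $T_{\mathrm E}^{-1}\otimes T_{\mathrm E}^{-1}$ applied to $F\otimes G$ corresponds in the two-field notation to $e^{-\hbar\Gamma_F}e^{-\hbar\Gamma_G}(F\otimes G)$, commutativity yields
\[
T_{\mathrm E}\circ M\circ(T_{\mathrm E}^{-1}\otimes T_{\mathrm E}^{-1})
\;=\; M\circ e^{\hbar\Gamma_F}e^{\hbar\Gamma_G}e^{\hbar\Gamma_{FG}}e^{-\hbar\Gamma_F}e^{-\hbar\Gamma_G}
\;=\; M\circ e^{\hbar\Gamma_{FG}}.
\]
The pure self-contraction pieces $e^{\pm\hbar\Gamma_F}$ and $e^{\pm\hbar\Gamma_G}$ — which are precisely the tadpole-generating terms — cancel exactly. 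Expanding the remaining exponential $e^{\hbar\Gamma_{FG}}$ and applying $M$ (identifying $\varphi_F=\varphi_G=\varphi$) recovers
\[
F\dE G=\sum_{n=0}^{\infty}\frac{\hbar^{n}}{n!}\bigl\langle F^{(n)}G_{(n)}\bigr\rangle,
\]
which by construction contains only edges connecting $F$ with $G$, i.e.\ no tadpoles.

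The only step that requires care is the commutativity claim and the Leibniz computation identifying the cross operator: one must make sure the combinatorial factor $\tfrac12$ in $\Gamma$ is properly absorbed by the two symmetric contributions of mixed derivatives, and that the resulting three operators truly commute on $\mathcal F(\Em)^{\otimes 2}$ so that the BCH-type rearrangement $e^{\hbar(\Gamma_F+\Gamma_G+\Gamma_{FG})}=e^{\hbar\Gamma_F}e^{\hbar\Gamma_G}e^{\hbar\Gamma_{FG}}$ is valid order by order in $\hbar$. Once this is done, the cancellation of tadpoles is immediate and no convergence question arises, since everything is understood as a formal power series in $\hbar$.
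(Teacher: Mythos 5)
Your proposal is correct and follows essentially the same route as the paper: your two-field operators $\Gamma_F,\Gamma_G,\Gamma_{FG}$ are exactly the paper's coproduct decomposition $\Delta\Gamma=\Gamma\otimes 1+1\otimes\Gamma+\Gamma'$, and the cancellation $e^{\hbar\Gamma_F}e^{\hbar\Gamma_G}e^{\hbar\Gamma_{FG}}e^{-\hbar\Gamma_F}e^{-\hbar\Gamma_G}=e^{\hbar\Gamma'}$ is the paper's key step verbatim. Your remark about absorbing the factor $\tfrac12$ into the two symmetric mixed terms correctly accounts for why $\Gamma'$ carries no such factor.
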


\begin{proof}The Euclidean time ordering operator, as well as the
corresponding operator(s) in pAQFT \citep{Brunetti2009} are induced
by second order functional differential operators, cf. Eq. (\ref{eq:T-Operator}),\[
\Gamma=\frac{1}{2}\int dx\, dy\, P(x,y)\,\frac{\delta^{2}}{\delta\varphi(x)\,\delta\varphi(y)}\,.\]
As differential operators on $\mathcal{F}(\Em)$ they fulfill the
Leibniz rule, which in turn may be written as a co-product rule:%
\footnote{see also \citep{BrouderCargese2009}%
}\[
\Gamma(F\cdot G)=M\circ\left(\Delta\Gamma\right)(F\otimes G)\,,\quad\Delta\Gamma=\Gamma\otimes1+1\otimes\Gamma+\Gamma',\]
where \[
\Gamma'(F\otimes G)=\int dx\, dy\, P(x,y)\frac{\delta F}{\delta\varphi(x)}\otimes\frac{\delta G}{\delta\varphi(y)}\,.\]

The time-ordered product hence is given by, cf. Eq. (\ref{eq:T-Prod-Operations}),\[
F\dE G=e^{\hbar\Gamma}\circ M\circ\left(e^{-\hbar\Gamma}F\otimes e^{-\hbar\Gamma}G\right)\,.\]
Applying the Leibniz rule and using the functional identity for the
exponential ($e^{A}e^{B}=e^{A+B}$), which holds due to commutativity
and associativity of the product of differential operators, leads
to: \begin{align*}
F\dE G & =M\circ e^{\hbar\Delta\Gamma}\circ\left(e^{-\hbar\Gamma}F\otimes e^{-\hbar\Gamma}G\right)\\
 & =M\circ e^{\hbar\Gamma'}\left(e^{\hbar\Gamma}e^{-\hbar\Gamma}F\otimes e^{\hbar\Gamma}e^{-\hbar\Gamma}G\right)\\
 & =M\circ e^{\hbar\Gamma'}\left(F\otimes G\right)\,.\end{align*}
Hence the result stated before:\[
F\dE G=\sum_{n=0}^{\infty}\frac{\hbar^{n}}{n!}\left\langle F^{\left(n\right)},P^{\otimes n}G^{\left(n\right)}\right\rangle \,.\]
\end{proof}

\section{\label{app:Local-Functionals}Local Functionals}

The definition of a local functional in \citep{Brunetti2009} differs
from the one given in this article. Hence, in order to be able to
apply their results on local functionals in our context, we have to
make sure that the functionals fulfilling the conditions of Definition
\ref{def:Local-Functional} are a subset of the set of local functionals
in the sense of \citep[Section 3.2]{Brunetti2009}. For this it suffices
to show that the support property {[}LF-1] implies the additivity
condition of \citep{Brunetti2009}. Together with Lemma 3.1 of \citep{Brunetti2009}
this proves equivalence of both definitions. The argument is taken
from \citep{BrunettiFredenhagenRibeiro2009}.

\begin{proof}Let $F$ be a smooth functional fulfilling\begin{equation}
\frac{\delta^{2}F}{\delta\varphi(x)\delta\varphi(y)}=0\qquad\mbox{for}\, x\neq y\,,\label{eq:LF-equivalence1}\end{equation}
we have to show that\begin{equation}
\supp(\varphi)\cap\supp(\chi)=\emptyset\quad\mbox{implies}\quad\forall\psi:\; F(\varphi+\psi+\chi)=F(\varphi+\psi)-F(\chi)+F(\psi+\varphi)\,.\label{eq:LF-equivalence2}\end{equation}

We have $\forall\psi$:\begin{equation}
\frac{\partial^{2}}{\partial\lambda\,\partial\mu}F(\lambda\varphi+\psi+\mu\chi)=\int dx\, dy\, F^{\left(2\right)}(\lambda\varphi+\psi+\mu\chi)(x,y)\,\varphi(x)\,\chi(y)\,,\label{eq:LF-equivalence3}\end{equation}
where due to (\ref{eq:LF-equivalence1}) the domain of integration
can be restricted to the diagonal $\left\{ \left(x,y\right):x=y\right\} $.
And since $\forall x$: $\varphi(x)\chi(x)=0$ due to the assumption
in (\ref{eq:LF-equivalence2}), we have that the integral on the right
hand side of (\ref{eq:LF-equivalence3}) vanishes, i.e.\[
\frac{\partial^{2}}{\partial\lambda\,\partial\mu}F(\lambda\varphi+\psi+\mu\chi)\equiv0\,.\]
Integration with respect to $\lambda$ gives\begin{align*}
\frac{\partial}{\partial\mu}F(\varphi+\psi+\mu\chi) & =\int_{0}^{1}d\lambda\,\frac{\partial^{2}}{\partial\lambda\,\partial\mu}F(\lambda\varphi+\psi+\mu\chi)+\frac{\partial}{\partial\mu}F(\psi+\mu\chi)\\
 & =\frac{\partial}{\partial\mu}F(\psi+\mu\chi)\,,\end{align*}
and integrating another time with respect to $\mu$ gives the desired
result:\begin{align*}
F(\varphi+\psi+\chi) & =F(\varphi+\psi)+\int_{0}^{1}d\mu\,\frac{\partial}{\partial\mu}F(\psi+\mu\chi)\\
 & =F(\varphi+\psi)-F(\psi)+F(\psi+\chi)\,.\end{align*}
\end{proof}

\end{fmffile}

\bibliographystyle{/scratch/keller/Literatur/kaialpha}
\bibliography{/scratch/keller/Literatur/Literatur_Physik}

\end{document}